\documentclass[smallextended]{svjour3}

\usepackage{times}
\usepackage[english]{babel}

\makeatletter
\newif\if@restonecol
\makeatother

\let\chapter\section 

\setlength{\pdfpagewidth}{8.5in}
\setlength{\pdfpageheight}{11in}

\usepackage[ruled, vlined, linesnumbered, nofillcomment]{algorithm2e}

\usepackage[tight,footnotesize]{subfigure}

\usepackage{booktabs}
\usepackage{url}
\usepackage{cite}
\usepackage{xspace}
\usepackage{amsmath}
\usepackage{amssymb}
\usepackage{mdwlist}
\usepackage{graphicx}
\usepackage{rotating}
\usepackage{microtype}
\usepackage[export]{adjustbox}
\usepackage{tikz}
\usetikzlibrary{snakes,arrows,shapes,positioning,fit,calc}
\usepackage{pgffor}
\usepackage{pgfplots}
\usepackage{pgfplotstable}
\usepackage[round]{natbib}
\usepackage{ifpdf}

\usepackage[pdftitle={How to corner your data},
            pdfauthor={Nikolaj Tatti},
			citecolor = black, linkcolor = black, colorlinks = true, bookmarksopen = true, bookmarksnumbered = true, hyperfootnotes = false]{hyperref}

\usepgfplotslibrary{external}
\tikzexternalize

\newcommand\comment[1]{}

\newcommand{\set}[1]{\left\{#1\right\}}
\newcommand{\pr}[1]{\left(#1\right)}
\newcommand{\fpr}[1]{\mathopen{}\left(#1\right)}

\newcommand{\fspr}[1]{\mathopen{}\left[#1\right]}

\newcommand{\abs}[1]{{\left|#1\right|}}

\newcommand{\enpr}[2]{\pr{#1 ,\ldots , #2}}

\newcommand{\funcdef}[3]{{#1}:{#2} \to {#3}}
\newcommand{\score}[1]{s\fpr{#1}}
\newcommand{\sm}{\setminus}

\newcommand{\define}{\leftarrow}

\newcommand{\sgm}[1]{\mathcal{#1}}

\newcommand{\freq}[1]{\mathit{a}\fpr{#1}}
\newcommand{\cs}[1]{\mathit{cs}\fpr{#1}}
\newcommand{\minc}[1]{\mathit{left}\fpr{#1}}
\newcommand{\maxc}[1]{\mathit{right}\fpr{#1}}

\newcommand{\brd}[1]{\mathit{brd}\fpr{#1}}

\newcommand{\mean}[2]{\operatorname{E}_{#1}\fspr{#2}}

\newcommand{\var}[2]{\operatorname{Var}_{#1}\fspr{#2}}

\SetKw{False}{false}
\SetKw{True}{true}
\SetKw{None}{none}
\SetKw{Continue}{continue}
\SetKw{AND}{and}
\SetKw{OR}{or}
\SetKw{EACH}{each}

\SetKwInOut{Input}{input}
\SetKwInOut{Output}{output}
\SetArgSty{textnormal}


\newcommand{\findorder}{\textsc{FindOrder}\xspace}



\pgfdeclarelayer{background}
\pgfdeclarelayer{foreground}
\pgfsetlayers{background,main,foreground}

\definecolor{yafaxiscolor}{rgb}{0.3, 0.3, 0.3}

\definecolor{yafcolor1}{rgb}{0.4, 0.165, 0.553}
\definecolor{yafcolor2}{rgb}{0.949, 0.482, 0.216}
\definecolor{yafcolor3}{rgb}{0.47, 0.549, 0.306}
\definecolor{yafcolor4}{rgb}{0.925, 0.165, 0.224}
\definecolor{yafcolor5}{rgb}{0.141, 0.345, 0.643}
\definecolor{yafcolor6}{rgb}{0.965, 0.933, 0.267}
\definecolor{yafcolor7}{rgb}{0.627, 0.118, 0.165}
\definecolor{yafcolor8}{rgb}{0.878, 0.475, 0.686}

\newlength{\yafaxispad}
\setlength{\yafaxispad}{-4pt}
\newlength{\yaftlpad}
\setlength{\yaftlpad}{\yafaxispad}
\addtolength{\yaftlpad}{-0pt}
\newlength{\yaflabelpad}
\setlength{\yaflabelpad}{-2pt}
\newlength{\yafaxiswidth}
\setlength{\yafaxiswidth}{1.2pt}
\newlength{\yafticklen}
\setlength{\yafticklen}{2pt}

\makeatletter
\def\pgfplots@drawtickgridlines@INSTALLCLIP@onorientedsurf#1{}
\makeatother

\newcommand{\yafdrawaxis}[4]{
	\pgfplotstransformcoordinatex{#1}\let\xmincoord=\pgfmathresult 
	\pgfplotstransformcoordinatex{#2}\let\xmaxcoord=\pgfmathresult 
	\pgfplotstransformcoordinatey{#3}\let\ymincoord=\pgfmathresult 
	\pgfplotstransformcoordinatey{#4}\let\ymaxcoord=\pgfmathresult 
	\pgfsetlinewidth{\yafaxiswidth} 
	\pgfsetcolor{yafaxiscolor}
	\pgfpathmoveto{\pgfpointadd{\pgfpointadd{\pgfplotspointrelaxisxy{0}{0}}{\pgfqpointxy{\xmincoord}{0}}}{\pgfqpoint{-0.5\yafaxiswidth}{\yafaxispad}}}
	\pgfpathlineto{\pgfpointadd{\pgfpointadd{\pgfplotspointrelaxisxy{0}{0}}{\pgfqpointxy{\xmaxcoord}{0}}}{\pgfqpoint{0.5\yafaxiswidth}{\yafaxispad}}}
	\pgfpathmoveto{\pgfpointadd{\pgfpointadd{\pgfplotspointrelaxisxy{0}{0}}{\pgfqpointxy{0}{\ymincoord}}}{\pgfqpoint{\yafaxispad}{-0.5\yafaxiswidth}}}
	\pgfpathlineto{\pgfpointadd{\pgfpointadd{\pgfplotspointrelaxisxy{0}{0}}{\pgfqpointxy{0}{\ymaxcoord}}}{\pgfqpoint{\yafaxispad}{0.5\yafaxiswidth}}}
	\pgfusepath{stroke}
}

\pgfplotscreateplotcyclelist{yaf}{%
{yafcolor1,mark options={scale=0.75},mark=o}, 
{yafcolor2,mark options={scale=0.75},mark=square},
{yafcolor3,mark options={scale=0.75},mark=triangle},
{yafcolor4,mark options={scale=0.75},mark=o},
{yafcolor5,mark options={scale=0.75},mark=o},
{yafcolor6,mark options={scale=0.75},mark=o},
{yafcolor7,mark options={scale=0.75},mark=o},
{yafcolor8,mark options={scale=0.75},mark=o}} 

\pgfplotsset{
	axis y line=left, axis x line=bottom,
	tick align=outside,
	tickwidth=\yafticklen,
	clip = false,
    x axis line style= {-, line width = 0pt, color=black!0},
    y axis line style= {-, line width = 0pt, color=black!0},
    x tick style= {line width = \yafaxiswidth, color=yafaxiscolor, yshift = \yafaxispad},
    y tick style= {line width = \yafaxiswidth, color=yafaxiscolor, xshift = \yafaxispad},
    x tick label style = {font=\scriptsize, yshift = \yaftlpad},
    y tick label style = {font=\scriptsize, xshift = \yaftlpad},
    every axis y label/.style = {at = {(ticklabel cs:0.5)}, rotate=90, anchor=center, font=\scriptsize, yshift = -\yaflabelpad},
    every axis x label/.style = {at = {(ticklabel cs:0.5)}, anchor=center, font=\scriptsize, yshift = \yaflabelpad},
    x tick label style = {font=\scriptsize, yshift = 1pt},
    grid = major,
    major grid style  = {dash pattern = on 1pt off 3 pt},
	every axis plot post/.append style= {line width=\yafaxiswidth} ,
	legend cell align = left,
	legend style = {inner sep = 1pt, cells = {font=\scriptsize}},
	legend image code/.code={%
		\draw[mark repeat=2,mark phase=2,#1] 
		plot coordinates { (0cm,0cm) (0.15cm,0cm) (0.3cm,0cm) };%
	} 
}

\pgfplotsset{colormap={cool}{rgb255(0cm)=(255,255,255); color(1cm)=(yafcolor5); color(3cm)=(black)}}
\pgfplotsset{colormap={cooler}{rgb255(0cm)=(255,255,255); color(1cm)=(yafcolor5)}}

\colorlet{cornercolor}{yafcolor5}
\colorlet{cornerfillcolor}{yafcolor2}

\newcommand{\drawdata}[1]{
\addplot[scatter,forget plot, only marks, mark = *, scatter src = explicit symbolic]
	table[x expr = {-\thisrowno{0}}, y expr = {\thisrowno{1}}, meta index = 2, header = false] {#1};
\addplot[scatter,forget plot, only marks, mark = *, scatter src = explicit symbolic]
	table[x expr = {-\thisrowno{1}}, y expr = {\thisrowno{0}}, meta index = 2, header = false] {#1};
}

\newcommand{\drawvarea}[3]{
\addplot[#3, forget plot, line width = 0.15pt, const plot, fill = #3!50]
	table[y expr = {\coordindex}, x expr = {-\thisrowno{#2} - 1}, header = false] {#1} -- (axis cs: -1, 0) \closedcycle;
\addplot[#3, forget plot, line width = 0.2pt, const plot] table[x expr = {-\coordindex}, y expr = {\thisrowno{#2}}, header = false] {#1};
}

\newcommand{\drawvalues}[2]{
\addplot[scatter,forget plot, only marks, mark = square*, mark size = 0.3pt,
         scatter src = explicit,
		 colormap name = #2,
		 mark options = {line width = 0pt}, scatter/use mapped color= {fill=mapped color, draw = mapped color}]
	table[x expr = {-\thisrowno{0}}, y expr = {\thisrowno{1}}, meta index = 3, header = false] {#1};
}

\newcommand{\drawvalueslower}[2]{
\addplot[scatter,forget plot, only marks, mark = square*, mark size = 0.3pt,
         scatter src = explicit,
		 colormap name = #2,
		 mark options = {line width = 0pt}, scatter/use mapped color= {fill=mapped color, draw = mapped color}]
	table[x expr = {-\thisrowno{1}}, y expr = {\thisrowno{0}}, meta index = 3, header = false] {#1};
}

\newcommand{\drawgraphvalue}[2]{
\addplot[yafcolor3, forget plot, line width = 0.15pt, const plot, fill = yafcolor3!50]
	table[y expr = {\coordindex}, x expr = {-\thisrowno{1} - 1}, header = false] {#1} -- (axis cs: -1, 0) \closedcycle;
\addplot[yafcolor2, forget plot, line width = 0.15pt, const plot, fill = yafcolor2!50]
	table[y expr = {\coordindex}, x expr = {-\thisrowno{0} - 1}, header = false] {#1} -- (axis cs: -1, 0) \closedcycle;

\addplot[scatter,forget plot, only marks, mark = square*, mark size = 0.2pt,
         scatter src = explicit,
		 colormap name = cool,
		 mark options = {line width = 0pt}, scatter/use mapped color= {fill=mapped color, draw = mapped color}]
	table[x expr = {-\thisrowno{0}}, y expr = {\thisrowno{1}}, meta index = 3, header = false] {#2};

\addplot[yafcolor3, forget plot, line width = 0.2pt, const plot] table[x expr = {-\coordindex}, y expr = {\thisrowno{1}}, header = false] {#1};
\addplot[yafcolor2, forget plot, line width = 0.2pt, const plot] table[x expr = {-\coordindex}, y expr = {\thisrowno{0}}, header = false] {#1};

}

\newcommand{\drawgraph}[2]{
\addplot[yafcolor2, forget plot, line width = 0.15pt, const plot, fill = yafcolor2!50]
	table[y expr = {\coordindex}, x expr = {-\thisrowno{1} - 1}, header = false] {#1} -- (axis cs: -1, 0) \closedcycle;
\addplot[yafcolor5, forget plot, line width = 0.15pt, const plot, fill = yafcolor5!50]
	table[y expr = {\coordindex}, x expr = {-\thisrowno{0} - 1}, header = false] {#1} -- (axis cs: -1, 0) \closedcycle;

\addplot[yafcolor2, forget plot, line width = 0.15pt, const plot] table[x expr = {-\coordindex}, y expr = {\thisrowno{1}}, header = false] {#1};
\addplot[yafcolor5, forget plot, line width = 0.15pt, const plot] table[x expr = {-\coordindex}, y expr = {\thisrowno{0}}, header = false] {#1};

\addplot[scatter,forget plot, only marks, mark = *, mark size = 0.2pt, scatter src = explicit symbolic]
	table[x expr = {-\thisrowno{0}}, y expr = {\thisrowno{1}}, meta index = 2, header = false] {#2};
}

\newcommand{\drawgraphiv}[2]{
\addplot[yafcolor3, forget plot, line width = 0.15pt, const plot, fill = yafcolor3!50]
	table[y expr = {\coordindex}, x expr = {-\thisrowno{2} - 1}, header = false] {#1} -- (axis cs: -1, 0) \closedcycle;
\addplot[yafcolor2, forget plot, line width = 0.15pt, const plot, fill = yafcolor2!50]
	table[y expr = {\coordindex}, x expr = {-\thisrowno{1} - 1}, header = false] {#1} -- (axis cs: -1, 0) \closedcycle;
\addplot[yafcolor5, forget plot, line width = 0.15pt, const plot, fill = yafcolor5!50]
	table[y expr = {\coordindex}, x expr = {-\thisrowno{0} - 1}, header = false] {#1} -- (axis cs: -1, 0) \closedcycle;

\addplot[yafcolor3, forget plot, line width = 0.15pt, const plot] table[x expr = {-\coordindex}, y expr = {\thisrowno{2}}, header = false] {#1};
\addplot[yafcolor2, forget plot, line width = 0.15pt, const plot] table[x expr = {-\coordindex}, y expr = {\thisrowno{1}}, header = false] {#1};
\addplot[yafcolor5, forget plot, line width = 0.15pt, const plot] table[x expr = {-\coordindex}, y expr = {\thisrowno{0}}, header = false] {#1};

\addplot[scatter,forget plot, only marks, mark = *, mark size = 0.2pt, scatter src = explicit symbolic]
	table[x expr = {-\thisrowno{0}}, y expr = {\thisrowno{1}}, meta index = 2, header = false] {#2};
}

\newcommand{\drawpoints}[3]{
\foreach \x/\y [count=\xi] in {#3} {
\node[#1] (#2\xi) at (\x,\y) {};
}}

\newcommand{\drawgrid}[3]{
\foreach \x in {#1 ,..., #2} { \draw[#3] (#1, \x) -- (\x, \x);}
\foreach \x in {#1 ,..., #2} { \draw[#3] (\x, #2) -- (\x, \x);}
\draw[#3] (#1, #1)  -- (#2, #2);
}

\newif\ifapx
\apxtrue

\clubpenalty=10000 
\widowpenalty = 10000

\title{Discovering Bands from Graphs}
\author{Nikolaj Tatti}

\institute{
N. Tatti \at
HIIT, Department of Information and Computer Science, Aalto University, Finland, and\\
Department of Computer Science, KU Leuven, Leuven, Belgium\\
\email{nikolaj.tatti@aalto.fi}}

\journalname{Data Mining and Knowledge Discovery}

\date{}

\begin{document}
\maketitle

\begin{abstract}
Discovering the underlying structure of a given graph is one of the fundamental
goals in graph mining.
Given a graph, we can often order vertices in a way that neighboring vertices
have a higher probability of being connected to each other. This implies that
the edges form a band around the diagonal in the adjacency matrix.
Such structure may rise for example if the graph was created over time: each vertex
had an active time interval during which the vertex was connected with other active
vertices.

The goal of this paper is to model this phenomenon. To this end, we formulate
an optimization problem: given a graph and an integer $K$, we want to order
graph vertices and partition the ordered adjacency matrix into $K$ bands such
that bands closer to the diagonal are more dense.  We measure the goodness of a
segmentation using the log-likelihood of a log-linear model, a flexible family
of distributions containing many standard distributions.  We divide the problem
into two subproblems: finding the order and finding the bands.  We show that
discovering bands can be done in polynomial time with isotonic regression, and
we also introduce a heuristic iterative approach.  For discovering
the order we use Fiedler order accompanied with a simple combinatorial
refinement.  We demonstrate empirically that our heuristic works well in
practice.

\end{abstract}

\section*{Categories and Subject Descriptors}
H.2.8\,[\textbf{Database\,management}]:\,Database\,applications--\textit{Data\,mining}


\keywords{monotonic segmentation, log-linear models, isotonic regression, Fiedler order, bands}

\section{Introduction}\label{sec:intro}

Consider a dataset given in Figure~\ref{fig:paleotoy:a}. This data contains
139 species discovered at 501 sites~\citep{fortelius05now}. As different
species live in different eras, the dataset can be sorted\footnote{Here, we sorted the data using the Fiedler order, see Section~\ref{sec:order}.} such that the data points
form a band. Let us construct a similarity matrix between the sorted species,
where the weight between two species is the number of sites. Since a large
number of the species-pairs will have do not share a single site, it is beneficial to view the matrix
as a weighted graph. We see from the graph, given in Figure~\ref{fig:paleotoy:b},
that most of the edges will be located close to the diagonal, forming a band.

\begin{figure}[htb!]\hfill
\subfigure[Data matrix\label{fig:paleotoy:a}]{
\includegraphics[width=6cm]{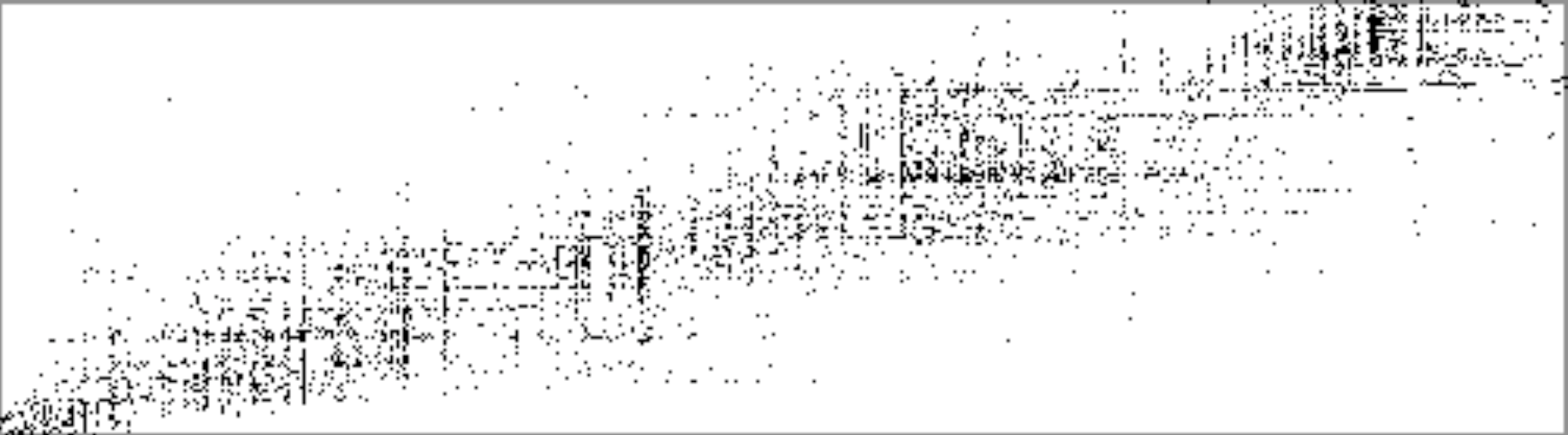}
}\hfill
\subfigure[Similarity graph\label{fig:paleotoy:b}]{
\begin{tikzpicture}
\begin{axis}[width = 4cm,height = 4cm, ticks = none, grid=none,
    every axis plot post/.style={}, ]

\drawvalues{data/paleo-dist.ord}{cool}
\drawvalueslower{data/paleo-dist.ord}{cool}

\end{axis}
\end{tikzpicture}}\hfill
\subfigure[Discovered bands\label{fig:paleotoy:c}]{
\begin{tikzpicture}
\begin{axis}[width = 4cm,height = 4cm, ticks = none, grid=none,
    every axis plot post/.style={}]

\drawgraphvalue{data/paleo-dist.out}{data/paleo-dist.ord}

\end{axis}
\end{tikzpicture}}\hspace*{\fill}
\caption{139 different species discovered from 501 different paleontological sites}
\label{fig:paleotoy}
\end{figure}

The phenomenon of having edges near the diagonal is not uncommon. For example, assume that the
graph was constructed over time and that each vertex had an active
time interval during which it connected to other active vertices with a higher
probability. In such case, we should be able to arrange the vertices such that
the edges are concentrated around the diagonal. As another example consider a graph
with a(n overlapping) clustering structure. If we can rearrange clusters such that
only neighboring clusters have significant overlap, if any, then there exists a
vertex order such that the edges are close to the diagonal.

The goal of this paper is to quantify this phenomenon, see
Figure~\ref{fig:paleotoy:c} for an example.  More formally, we introduce the following
optimization problem. Given a graph and an integer $K$, order vertices and
segment the adjacency matrix into $K$ bands such that a) each band respects the vertex order---when
drawn the boundary of the band must either move down or right, b) the edge
density in inner bands is higher than the density in the outer bands, and c)
segments are as homogeneous as possible, according to some score function. 
Note that these bands may have varying thickness as illustrated in
Figure~\ref{fig:paleotoy:c}.

In order to score the bands, we model the graph as a mixture model, where each
band is a Erd\H{o}s-R\'enyi model~\citep{erdos1960erg}. Our goal is to find
bands that optimize the likelihood of this model. As an application, discovered
bands can be used for determining communities for individual vertices: if $(u,
v)$ belongs to the $k$th band, then we say that $v$ belongs to the $k$th
community of $u$, small $k$s corresponding to the inner circles. 

We break this optimization problem into two natural subproblems. The first
problem is to find optimal $K$ bands given the order and the second problem is
to find a good order. We approach the latter problem by using 
Fiedler order~\citep{fiedler75fiedler} accompanied with a simple greedy refinement heuristic.

Most of this work is devoted into solving the first subproblem which happens to
have a polynomial solution. In fact, this problem resembles a monotonic
segmentation problem~\citep[see][]{haiminen:04:unimodal}, however, it is much
more intricate and there is no obvious technique for solving such problem. We
will show that for certain scores, we do not have to consider all possible
segmentations. We introduce a concept of \emph{borders}. Roughly speaking, a
border divides the adjacency matrix into two parts such that the inner part has a higher
average. We will show that the optimal segmentation can be constructed from the
borders.  This allows us to transform the original optimization problem into
two separate problems. First we need to discover all the borders, secondly we
need to select the optimal segmentation using borders as candidates.
Surprisingly, the second subproblem turns out to be an instance of the sequence
segmentation problem, and can be solved using a standard dynamic
program given by~\citet{bellman:61:on}.

We consider two techniques for discovering borders. The first approach is based
on isotonic regression~\citep{spouge:03:isotonic}, and gives us an exact
solution.  Despite being a polynomial-time solution, this approach requires
that the graph is stored in a full form. Hence, we also present an iterative
heuristic technique that uses sparsity of the graph to its advantage.

The rest of the paper is organized as follows. We introduce preliminaries and
formally state our problem in Section~\ref{sec:prel}. We introduce the concept of
borders in Section~\ref{sec:borders} and present algorithms for discovering borders
in Section~\ref{sec:discovery}. We consider discovering orders in Section~\ref{sec:order}.
We present related work in Section~\ref{sec:related}
and experimental evaluation in Section~\ref{sec:experiments}. Finally, we present
our conclusions in Section~\ref{sec:conclusion}.

\section{Preliminaries and Problem Statement}
\label{sec:prel}
In this section we present our notation and give the formal problem statement.
We first introduce the optimization problem for graphs and then cast this problem
into a more general setup.

\subsection{Discovering Bands from Graphs}

Our first task is to define formally what we mean by a band.  In order to do
this, assume an undirected graph $H = (V, F)$. If we are given an order $o$ on vertices,
essentially a mapping $\funcdef{o}{1, \ldots, \abs{V}}{V}$, we say that $H$
respects the order if the neighborhood of each vertex can be seen as a segment
w.r.t. the order, that is, for every $v \in V$, there exist integers $s$ and $e$ such that $\set{u \in
V \mid (v, u) \in F} \cup \set{v} = \set{o(i) \mid s \leq i \leq e}$.
If we order the vertices according to $o$, then $H$ will have all its edges
next to diagonal. Our goal is given a graph $G$, find $o$ and $H$ optimizing
a certain score.

Let us now define the score that we wish to optimize. Assume that we are given a graph $G = (V, E)$.
Let $X \subseteq V \times V$ be a subset of vertex pairs. Let us define
\[
	\score{X} = \abs{X \cap E}\log \freq{X} + \abs{X \setminus E} \log(1 - \freq{X}),
	\quad\text{where}\quad  \freq{X} = \frac{\abs{X \cap E}}{\abs{X}},
\]
which is a maximum log-likelihood of a Bernoulli variable. 
We can now formulate the optimization problem.

\begin{problem}[2-band discovery]
Given a graph $G = (V, E)$ find an order $o$ on vertices and a graph $H$ respecting that
order $o$ and maximizing $\score{E(H)} + \score{(V \times V) \setminus E(H)}$ such that
$\freq{E(H)} \geq \freq{(V \times V) \setminus E(H)}$.
\end{problem}
In other words, we are modelling edges in $G$ as a mixture of two Bernoulli variables.
The last constraint requires that the density of $G$ in the edges of $H$, that is,
next to diagonal, should be higher than in the non-edges of $H$. As mentioned in the introduction,
we are interested in a more general setup where we can discover several bands.
This gives us the following optimization problem.

\begin{problem}[$K$-band discovery]
Given a graph $G = (V, E)$ and integer $K$, find an order $o$ on vertices and
$K + 1$ graphs $H_0, \ldots H_K$ respecting order such that $\emptyset = E(H_0)
\subsetneq \cdots \subsetneq E(H_K) = V \times V$, the density is decreasing,
$\freq{C_i} \geq \freq{C_{i + 1}}$,
and the score $\sum_{i = 1}^K \score{C_i}$ is maximized, where $C_i = E(H_i) \setminus E(H_{i - 1})$. 
\end{problem}

In order to approach this optimization problem we will split it into two subproblems.
The first subproblem is to find the bands for a fixed order.

\begin{problem}[ordered $K$-band discovery]
\label{prob:orderband}
Given a graph $G = (V, E)$ an integer $K$ and an order $o$ on vertices, find
$K + 1$ graphs $H_0, \ldots H_K$ respecting order such that $\emptyset = E(H_0)
\subsetneq \cdots \subsetneq E(H_K) = V \times V$, the density is decreasing,
$\freq{C_i} \geq \freq{C_{i + 1}}$,
and the score $\sum_{i = 1}^K \score{C_i}$ is maximized, where $C_i = E(H_i) \setminus E(H_{i - 1})$. 
\end{problem}

The second subproblem is to find the actual order. Our main focus will be the
first subproblem for which we develop a polynomial exact solution. We address
discovering the order in Section~\ref{sec:order}.

\subsection{Band Discovery as Monotonic 2D-segmentation}

Graph $H$ in ordered band discovery has a special property: if we order $H$
based on the order $o$ and consider the upper-half of the adjacency matrix,
then we see that all 1s are concentrated next to the diagonal. We will use this
observation to cast band discovery into a more general segmentation problem.

In order to do so,
assume that we are given a dataset of size $M \times N$.  Define $A = \set{(a,
b) \mid 1 \leq a \leq M, 1 \leq b \leq N}$ to be the set of all entries.
We say that $U \subseteq A$ is a \emph{corner} if for every $(a, b) \in U$,
and for every $x$ and $y$ such that $1 \leq x \leq a$ and $1 \leq y \leq b$, 
an entry $(x, y)$ is a member of $U$.

Given an integer $K$ and a corner $B$, we define a $K$-\emph{segmentation} to
be the set of $K + 1$ corners, $U_0, \ldots, U_K$ such that $U_0 = B$,
$U_K = A$, and $U_{i - 1} \subseteq U_i$ for each $i = 1, \ldots, K$. We will
refer to the difference set $U_i \sm U_{i - 1}$ as a \emph{segment}.

Our goal is to segment given data into $K$ segments such that this segmentation
maximizes the likelihood of a log-linear model for each segment. By log-linear
models, also known as exponential family, we mean models whose probability
density function can be written as
\[
	p(x \mid r) = \exp(q(x) + Z(r) + rS(x)),
\]
where $S$ is a function mapping each data point to a real number, $r$ is the
parameter of the model, and $Z(r)$ is the normalization constant.  Many
standard distributions such as Bernoulli, binomial, Gaussian, and Poisson
are log-linear distributions. Interestingly, one can show with a direct
computation that using a Gaussian distribution with a fixed variance corresponds
to minimizing $L_2$ error.

Before we define our score, let us demonstrate that we can safely assume that
$S(x) = x$ and $q(x) = 0$. In order to do that, let $C_1, \ldots, C_K$ be $K$
subsets of $A \setminus B$ such that $C_1 \cup \cdots \cup C_K = A \setminus B$ and $C_i \cap C_j =
\emptyset$ for $i \neq j$.
Assume that we have assigned to each $C_i$, a parameter $r_i$. 

We will measure the goodness of a segmentation by the log-likelihood of the model,
\[
\begin{split}
	\sum_{i = 1}^K \sum_{c \in C_i} \log p(D(c) \mid r_i) 
	&= \sum_{i = 1}^K\big( \abs{C_i}Z(r) + \sum_{c \in C_i} q(D(c)) + \sum_{c \in C_i} r_iS(D(c))\big) \\
	&= \sum_{c \in A \setminus B} q(D(c)) + \sum_{i = 1}^K\big( \abs{C_i}Z(r) + r_i\sum_{c \in C_i} S(D(c))\big)
	\quad.
\end{split}
\]
Note that the first term does not depend on $C_i$ or $r_i$. Consequently, we can ignore
it and by doing so ignore $q(x)$ term. We can also safely assume that $S(x) = x$. Otherwise,
we can transform data $D$ to a new dataset $D'$ by setting $D'(c) = S(D(c))$.

We can now formally define our score. Given a segment $C$ and a parameter $r$,
we define the score $\score{C \mid r}$ as 
\[
	\score{C \mid r} = \abs{C}(Z(r) + r\freq{C}),\ \text{where}\  \freq{C} = \frac{1}{\abs{C}}\sum_{c \in C} D(c)\quad.
\]
We also define $\score{C} = \sup_r \score{C \mid r}$ to be the score of the
optimal model. Given a $K$-segmentation $\sgm{U} = \enpr{U_0}{U_K}$,
we define the score $\score{\sgm{U}}$ as
\[
	\score{\sgm{U}} = \sum_{i = 1}^K \score{U_i \sm U_{i - 1}}\quad.
\]

We say that a $K$-segmentation $\sgm{U}$ is \emph{monotonic} if $\freq{U_{i + 1}} \leq \freq{U_i}$
for each $1 \leq i \leq K - 1$ such that $U_i \neq \emptyset$. Our goal is to solve
the following problem.

\begin{problem}[2D-segmentation]
\label{prob:segmentation}
Given a dataset $D$, a corner $B$, a log-linear model, and an integer $K$, find a monotonic
$K$-segmentation $\sgm{U}$ maximizing $\score{\sgm{U}}$.
\end{problem}

We can now see that band discovery is in fact an instance of 2D-segmentation.
The dataset $D$ is in fact the adjacency matrix of $G$, the corner $B$ is a
diagonal, and the score model is Bernoulli variable. We should point out that
solving the more general problem has its advantages. If we have weights on edges,
we can use some other log-linear model, such as Poisson model, to score the
segmentation.  Moreover, we do not have to restrict ourselves to graphs, we can
segment any given matrix. On the other hand, discovering ordered bands is essentially
as difficult as solving 2D-segmentation, that is, the amount of additional work we need to
do to solve the more general case is negligible.

From now on, we will ignore $B$ for the sake of simplicity, and assume that we
want to segment the whole dataset. The algorithms that we present can be easily
adjusted to handle the more general case when $B$ is given.

The next two sections are devoted to solving Problem~\ref{prob:segmentation}. We discuss
discovering the vertex order in Section~\ref{sec:order}.

\section{Borders}
\label{sec:borders}

We can easily show that there are ${N + M \choose M}$ different corners for a
data of size $M \times N$.  This implies that we cannot enumerate all possible
corners in order to solve Problem~\ref{prob:segmentation}. 
However, we can show that we do not have to consider all possible corners.

In order to do so,
we say that a corner $U$ is a \emph{border} if there are no corners $X$ and $Y$
such that $X \subsetneq U \subsetneq Y$ and $\freq{Y \sm U}
\geq \freq{U \sm X}$.  We denote all borders by $\brd{D}$. An example of
a non-border is given in Figure~\ref{fig:toy:a}.

\begin{figure}[htb!]
\hfill
\subfigure[\label{fig:toy:a}]{
\begin{tikzpicture}[baseline]
\begin{axis}[
    width = 3.5cm,height = 3.5cm,
    xmax = 29, ymin = -29, ymax = 0, xmin = 0,
    no markers, ticks = none, grid=none,
	legend entries = {$X$, $Y$, $Z$},
	legend pos = outer north east ]

\addplot[fill, cornerfillcolor!40, forget plot] coordinates { (29, -29) (29, 0) (0, 0) (29, -29) };
\addplot+[fill, cornerfillcolor!10, forget plot] table[x expr = \coordindex, y expr = {-\thisrowno{0} - \thisrowno{1}}, header = false] {toy1.dat} \closedcycle;

\addplot+[fill, cornerfillcolor!60, forget plot] table[x expr = \coordindex, y expr = {-\thisrowno{0}}, header = false] {toy1.dat} \closedcycle;
\addplot+[fill, cornerfillcolor!20, forget plot] table[x expr = \coordindex, y expr = {-\thisrowno{0} - \thisrowno{2}}, header = false] {toy1.dat} \closedcycle;

\addplot[cornerfillcolor!50, forget plot]  coordinates { (29, -29) (29, 0) (0, 0) } \closedcycle;

\addplot+[cornercolor] table[x expr = \coordindex, y expr = {-\thisrowno{0}}, header = false] {toy1.dat} -- (axis cs: 29, -29) -- (axis cs: 0, 0) \closedcycle;
\addplot+[cornercolor, dashed] table[x expr = \coordindex, y expr = {-\thisrowno{0} - \thisrowno{1}}, header = false] {toy1.dat};
\addplot+[cornercolor, dotted] table[x expr = \coordindex, y expr = {-\thisrowno{0} - \thisrowno{2}}, header = false] {toy1.dat};

\end{axis}
\end{tikzpicture}}
\hfill
\subfigure[\label{fig:toy:b}]{
\begin{tikzpicture}[baseline]
\begin{axis}[
    width = 3.5cm,height = 3.5cm,
    xmax = 29, ymin = -29, ymax = 0, xmin = 0,
    no markers, ticks = none, grid=none,
	legend entries = {$X$, $Y$},
	legend pos = outer north east]

\addplot[fill, cornerfillcolor!40, forget plot] coordinates { (29, -29) (29, 0) (0, 0) (29, -29) };

\addplot+[fill, cornerfillcolor!50, forget plot] table[x expr = \coordindex, y expr = {-\thisrowno{1}}, header = false] {toy2.dat} \closedcycle;
\addplot+[fill, cornerfillcolor!20, forget plot] table[x expr = \coordindex, y expr = {-\thisrowno{0}}, header = false] {toy2.dat} \closedcycle;
\addplot+[fill, cornerfillcolor!10, forget plot] table[x expr = \coordindex, y expr = {-min(\thisrowno{1}, \thisrowno{0})}, header = false] {toy2.dat} \closedcycle;

\addplot[cornerfillcolor!50, forget plot]  coordinates { (29, -29) (29, 0) (0, 0) } \closedcycle;

\addplot+[cornercolor, solid] table[x expr = \lineno, y expr = {-\thisrowno{1}}, header = false] {toy2.dat} -- (axis cs: 29, -29) -- (axis cs: 0, 0) \closedcycle;
\addplot+[cornercolor, dashed] table[x expr = \lineno, y expr = {-\thisrowno{0}}, header = false] {toy2.dat} -- (axis cs: 29, -29);

\end{axis}
\end{tikzpicture}}
\hspace*{\fill}
\caption{Toy examples of corners. In Figure~\ref{fig:toy:a} $X$ is a corner but not a border since $\freq{Z \sm X} \geq \freq{X \sm Y}$.
Figure~\ref{fig:toy:b} shows an example of Proposition~\ref{prop:chain}. Both $X$ and $Y$ are corners but $X$ cannot be a border
since $\freq{Y \sm X} \geq \freq{X \sm Y}$. The density of the corners is represented by the shade of the color.}

\end{figure}
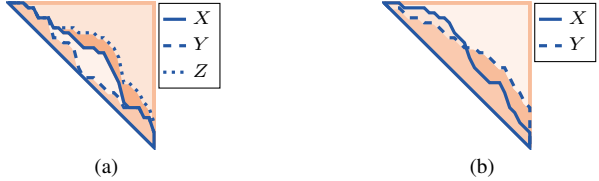

The next key proposition shows that we can safely ignore all corners that
are not borders.

\begin{proposition}
\label{prop:borders}
Let $\sgm{U}$ be a $K$-segmentation. Then there exists a $K$-segmentation $\sgm{V} =
\enpr{V_0}{V_K}$ such that $V_i \in \brd{D}$ for every $i = 0, \ldots, K$ and
$\score{\sgm{V}} \geq \score{\sgm{U}}$.
\end{proposition}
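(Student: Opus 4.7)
My plan is to show that whenever a $K$-segmentation $\sgm{U}$ contains some corner $U_i$ that is not a border, we can replace $U_i$ by another corner without decreasing the score; iterating this step and using the finiteness of the corner lattice produces the desired border segmentation $\sgm{V}$. The whole argument rests on the convex structure of the score. Writing $L(\mu) = \sup_r (Z(r) + r\mu)$, we have $\score{C} = F(s_C, \abs{C})$ with $F(s, n) = n L(s/n)$ and $s_C = \sum_{c \in C} D(c)$. As a pointwise supremum of affine functions in $\mu$, the function $L$ is convex, and its perspective $F$ is therefore jointly convex in $(s, n)$; this is what lets me compare the score before and after a local swap via first-order supporting hyperplanes.

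Suppose $U_i$ is not a border, with witnesses $X \subsetneq U_i \subsetneq Y$ satisfying $\freq{Y \setminus U_i} \geq \freq{U_i \setminus X}$. Using that unions and intersections of corners are again corners, I would first replace $X$ by $X \cup U_{i-1}$ and $Y$ by $Y \cap U_{i+1}$ to secure $U_{i-1} \subseteq X \subsetneq U_i \subsetneq Y \subseteq U_{i+1}$, handling separately the degenerate cases where such a replacement collapses $X$ onto $U_i$ (or $Y$ onto $U_i$) by picking a different witness, or by observing that a direct local improvement is already available. With the nested chain in place, I consider the two candidate modifications: (a) replace $U_i$ by $X$, moving $P = U_i \setminus X$ from $C_i$ to $C_{i+1}$, and (b) replace $U_i$ by $Y$, moving $Q = Y \setminus U_i$ from $C_{i+1}$ to $C_i$.

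Joint convexity of $F$ and the supporting hyperplane inequality give lower bounds on the two score changes of $-\abs{P}\,g(\freq{P})$ and $\abs{Q}\,g(\freq{Q})$, respectively, where $g(\mu) = v_1 \mu + v_2$ and $v = \nabla F(s_{C_i}, \abs{C_i}) - \nabla F(s_{C_{i+1}}, \abs{C_{i+1}})$. Since $\partial_s F = L'$, the slope is $v_1 = L'(\freq{C_i}) - L'(\freq{C_{i+1}})$, which is non-negative in the monotonic regime $\freq{C_i} \geq \freq{C_{i+1}}$ by convexity of $L$. Combined with the witness inequality $\freq{P} \leq \freq{Q}$, monotonicity of $g$ then rules out the unfavourable case $g(\freq{P}) > 0 > g(\freq{Q})$, so at least one of (a), (b) does not decrease the score. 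The main obstacle is the anti-monotonic case $\freq{C_i} < \freq{C_{i+1}}$, in which the supporting hyperplane bound has the wrong sign; there, however, $\sgm{U}$ is itself locally suboptimal and a direct swap of mass across the $U_i$ boundary already strictly increases the score, letting us reduce to the monotonic case before invoking the main argument. Termination is then straightforward, because the score is bounded above and each nontrivial move strictly changes the chain in a way that cannot revisit a previous state.
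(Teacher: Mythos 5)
Your overall architecture---replace a non-border corner by one of its two witnesses, control the score change via convexity of the perspective function, and iterate---is essentially the strategy of the paper (your convexity of $L$ is the Legendre dual of the paper's Lemma~\ref{lem:concave}), but several of the steps you compress into a sentence each are exactly where the real work lies, and as stated they fail. First, the squeeze step: if $X \subsetneq U_i \subsetneq Y$ witness that $U_i$ is not a border, replacing $X$ by $X \cup U_{i-1}$ and $Y$ by $Y \cap U_{i+1}$ does \emph{not} preserve the inequality $\freq{Y \sm U_i} \geq \freq{U_i \sm X}$, since passing to the subsets $U_i \sm (X \cup U_{i-1})$ and $(Y \cap U_{i+1}) \sm U_i$ can move either average in either direction. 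The problematic case is not a ``collapse'' of the witnesses but a witness $Y = \maxc{U_i}$ that genuinely straddles $U_{i+1}$; the paper's Lemma~\ref{lem:squeeze} resolves this by showing (via Lemma~\ref{lem:augment}) that $U_{i+1}$ is then itself a non-border with witnesses $U_i$ and $\maxc{U_{i+1}}$, and iterating the argument along the chain until the witnesses do fit between consecutive corners. That index-shifting argument is a missing idea in your proposal, not a degenerate case.

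Second, your reduction of the anti-monotone case $\freq{C_i} < \freq{C_{i+1}}$ is wrong as stated: take Bernoulli data where $C_i$ is all zeros and $C_{i+1}$ is all ones. Then $U_i$ is not a border, the segmentation is anti-monotone, yet it already achieves the maximal score $0$, so no ``direct swap of mass across the boundary'' can strictly increase the score. The paper instead works throughout with a \emph{fixed} parameter vector, merges the offending pair of segments at a common parameter (Lemma~\ref{lem:join}) and pads with an empty corner (Corollary~\ref{cor:monotonepar}); only then is the comparison you want available. The same example exposes a related problem: $\nabla F$ blows up at segments of density $0$ or $1$, so your supporting-hyperplane bounds degenerate precisely where they are needed, which is why the paper runs the whole comparison with finite parameters and an $\epsilon$ of slack (Proposition~\ref{prop:ascent}). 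Finally, termination is not ``straightforward'': your move (a) shrinks $U_i$ while move (b) enlarges it, both possibly with zero score gain, so an alternating sequence could a priori cycle. The paper's Proposition~\ref{prop:ascent} proves the stronger dichotomy that a tie in score always comes with a strict decrease in the partial order $\precneqq$ on segmentations, and it is that well-founded order, not the score alone, that guarantees convergence.
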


We present the proof for Proposition~\ref{prop:borders} in Appendix~\ref{sec:proofborder}.

The following proposition states
that the borders form a chain, a key property, which we also illustrate in
Figure~\ref{fig:toy:b}.

\begin{proposition}
\label{prop:chain}
Let $D$ be a dataset.  Let $U,\ V \in \brd{D}$ be two borders.
Then either $U \subseteq V$ or $V \subseteq U$.
\end{proposition}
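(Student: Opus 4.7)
The plan is to argue by contradiction, exploiting the fact that corners are closed under both intersection and union. The defining condition of a corner is a downward-closed property (if $(a,b)$ is in the set then so is every $(x,y)$ with $x \leq a$, $y \leq b$), and this property is preserved under intersections and unions of subsets of $A$. So for any two corners $U$ and $V$, both $U \cap V$ and $U \cup V$ are corners as well.

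Suppose, for contradiction, that $U$ and $V$ are borders with $U \not\subseteq V$ and $V \not\subseteq U$. Put $X = U \cap V$ and $Y = U \cup V$. By the incomparability assumption we have the strict chain $X \subsetneq U \subsetneq Y$, and by the observation above $X$ and $Y$ are corners. The key simplification I would highlight is that $Y \sm U = V \sm U$ and $U \sm X = U \sm V$, so the ``outer minus $U$'' and ``$U$ minus inner'' slices are exactly the symmetric-difference pieces $V \sm U$ and $U \sm V$.

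Now I would apply the definition of a border to $U$ with this choice of $X$ and $Y$: since $U$ is a border, it is \emph{not} the case that $\freq{Y \sm U} \geq \freq{U \sm X}$, which under the simplification above reads
\[
\freq{V \sm U} < \freq{U \sm V}.
\]
Applying the border property of $V$ to the exact same $X$ and $Y$ (noting $X \subsetneq V \subsetneq Y$ also holds), we get the symmetric inequality
\[
\freq{U \sm V} < \freq{V \sm U},
\]
which contradicts the previous one. Hence $U$ and $V$ must be comparable.

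The proof is essentially a one-shot contradiction argument, so there is no real obstacle beyond spelling out the two rewrites $Y\setminus U = V\setminus U$ and $U\setminus X = U\setminus V$. The only thing to double-check is that both strict inclusions hold (which uses $U\not\subseteq V$ to get $U\cap V \subsetneq U$ and $V\not\subseteq U$ to get $U \subsetneq U\cup V$), and that the border definition is correctly applied in both directions; beyond that the argument is symmetric and mechanical.
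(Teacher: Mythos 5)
Your proof is correct and follows essentially the same route as the paper: take $X = U \cap V$ and $Y = U \cup V$ and observe that $Y \sm U = V \sm U$ and $U \sm X = U \sm V$, so incomparability would violate the border condition. The only cosmetic difference is that the paper fixes which of the two densities is larger (WLOG) and contradicts the border property of one set, whereas you apply the definition symmetrically to both $U$ and $V$ and obtain two mutually contradictory strict inequalities.
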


\begin{proof}
Assume that the proposition does not hold, then both sets $U \sm V$ and
$U \sm V$ are non-empty. Assume that $\freq{U \sm V} \geq \freq{V
\sm U}$, otherwise swap $U$ and $V$. Let $X = U \cap V$ and $Y = U \cup
V$.  We have $\freq{Y \sm V} = \freq{U \sm V} \geq \freq{V
\sm U} = \freq{V \sm X}$.  Consequently, $V$ is not a border, which
is a contradiction.
\qed\end{proof}

Proposition~\ref{prop:chain} has several key implications.  Assume that we have
a dataset $D$ of size $N \times M$. Then the number of borders is bounded by
$NM + 1$. From now on we will assume that $\brd{D} = U_0, \ldots, U_L$ is
ordered from smallest to largest. In order to store this list efficiently, we
define $C_i = U_i \sm U_{i - 1}$ for $i = 1, \ldots, L$. Instead of storing
$U_i$ we simply store $C_i$. Since $C_i$ are disjoint and the union of $C_i$
is equal to $A$, the set of all entries, we can store the borders in $O(NM)$ space.

Propositions~\ref{prop:borders}--\ref{prop:chain} allow us to divide Problem~\ref{prob:segmentation}
into two subproblems. The first problem is to find all the borders.

\begin{problem}
\label{prob:borders}
Given a dataset $D$, compute $\brd{D}$.
\end{problem}

After we have discovered the borders, we can now use them to find the optimal
segmentation. We will restate the problem in a slightly different manner, using
directly segments instead of corners. In order to do so, given a list of segments
$C_1, \ldots, C_L$, let us define $C_{[a, b]}$ to mean $\bigcup_{i = a}^b C_i$.
Note that if we set $C_i = U_i \sm U_{i - 1}$, then it follows that
$C_{[a, b]} = U_b \sm U_{a - 1}$. This implies that we can reformulate
the optimization problem as follows.

\begin{problem}
\label{prob:seqseg}
Given a sequence of $L$ segments $C_1, \ldots, C_L$ and an integer $K$,
find $K$ intervals, $[b_i, e_i]$, such that $b_1 = 1$, $e_K = L$, 
and $b_i = e_{i - 1} + 1$, for $i = 2, \ldots, K$, optimizing 
\[
	\sum_{i}^{K} \score{C_{[b_i, e_i]}}\quad.
\]
\end{problem}

Note that we have dropped the monotonicity condition from the definition of the
problem.  We will see later in Corollary~\ref{cor:monotone} that if $C_i$ are
constructed from borders, that is, $C_i = U_i \sm U_{i - 1}$, then
$\freq{C_{i + 1}} < \freq{C_i}$. Thus monotonicity will automatically be
guaranteed.

Problem~\ref{prob:seqseg} is in fact an instance of a classic sequence
segmentation problem, where the goal is to split a sequence of length $L$ into $K$
homogeneous segments. This can be solved by a dynamic program in $O(L^2K)$ time
and $O(KL)$ space~\citep{bellman:61:on}.  To see this let us write $O_k(i)$ to be an
optimal $k$-segmentation for a sequence $C_1, \ldots, C_i$.
Then $O_k(i)$ is equal to  $O_{k - 1}(j)$ augmented with $[j +1 , i]$, and $j < i$
is selected such that the score
\[
	\score{O_{k - 1}(j)} + \score{C_{[j + 1, i]}}
\]
is maximized.  We can iteratively compute this
by first computing $O_1(i)$ for each $i = 1,\ldots, L$, and use the above
identity to $O_k(i)$ from $O_{k - 1}(i)$ until we reach $K$ segments.

To summarize, we discover bands in 3 steps:
\begin{enumerate*}
\item Order the dataset, if one is not given. 
\item Compute borders $\brd{D}$ (Problem~\ref{prob:borders}).
\item Segment borders to obtain $K$-segmentation (Problem~\ref{prob:seqseg}).
\end{enumerate*}

We have already shown that Problem~\ref{prob:seqseg} can be solved by using the
classic segmentation technique. In the next two sections we discuss how to
obtain the borders, either exactly or approximately. Finally, in
Section~\ref{sec:order} we discuss how to obtain the order. As discovering the
order seems to be computationally intractable we resort to spectral heuristics.

\section{Discovering borders}
\label{sec:discovery}

In this section we focus on discovering borders (Problem~\ref{prob:borders}).
Namely, we provide a polynomial-time dynamic program that discovers borders
correctly.  In addition we provide a heuristic for the cases when the exact
discovery is too time-consuming.

\subsection{Maximal and Minimal Corners}

In order to define the discovery algorithm, we need to introduce the notion
of maximal and minimal corners. We will show that these notions are closely
related to borders.

Let $U$ be a corner. We define $\minc{U}$ to be the \emph{minimal corner} $V
\subsetneq U$ such that $\freq{U \sm V}$ is the smallest possible. If
there are several candidates, we choose the smallest one.\!\footnote{We can
easily show that this choice is unique.} We also define $\maxc{U}$ to be the
\emph{maximal corner} $V \supsetneq U$ such that $\freq{V \sm U}$ is the
largest possible.  If there are several candidates, we choose the largest one.

We can use maximal and minimal corners to describe borders.

\begin{proposition}
\label{prop:move}
Let $U$ and $V$ be two consecutive
borders. Then $U = \minc{V}$ and $V = \maxc{U}$.
\end{proposition}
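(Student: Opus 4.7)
The plan is to show $V = \maxc{U}$; the companion claim $U = \minc{V}$ follows by a symmetric argument. Let $W = \maxc{U}$. I would prove $W = V$ in two stages: first that $W$ is itself a border, and then that $W = V$ by Proposition~\ref{prop:chain} and the consecutiveness of $U$ and $V$.

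For the first stage, suppose for contradiction that $W$ is not a border, so there are corners $X \subsetneq W \subsetneq Y$ with $\freq{Y \setminus W} \geq \freq{W \setminus X}$. The main tool throughout is the weighted-average identity that $\freq{A \cup B}$ equals the weighted average of $\freq{A}$ and $\freq{B}$ for disjoint $A$ and $B$. Applied to $Y \setminus U = (W \setminus U) \sqcup (Y \setminus W)$ together with the maximality of $W$ (which forces $\freq{Y \setminus U} \leq \freq{W \setminus U}$), it yields $\freq{Y \setminus W} \leq \freq{W \setminus U}$, strictly unless $Y \subseteq W$ by the largest-tiebreaker---which $Y \supsetneq W$ rules out. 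I would then split on whether $X \supseteq U$. If yes, maximality applied to the candidate $X$ and the decomposition $W \setminus U = (X \setminus U) \sqcup (W \setminus X)$ gives $\freq{W \setminus X} \geq \freq{W \setminus U}$; chaining with the hypothesis forces $\freq{Y \setminus W} = \freq{W \setminus U}$, which the largest-tiebreaker again rules out. If $X \not\supseteq U$, then $X \cap U \subsetneq U$, so the border property of $U$ applied to $X \cap U$ below and $W$ above gives $\freq{U \setminus X} > \freq{W \setminus U}$; setting $X' = X \cup U$, maximality of $W$ gives $\freq{W \setminus X'} \geq \freq{W \setminus U}$, and the decomposition $W \setminus X = (W \setminus X') \sqcup (U \setminus X)$ then yields $\freq{W \setminus X} > \freq{W \setminus U}$, contradicting $\freq{W \setminus X} \leq \freq{Y \setminus W} \leq \freq{W \setminus U}$.

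For the second stage, since $W$ is a border, Proposition~\ref{prop:chain} makes $W$ and $V$ comparable. If $W \subsetneq V$, then $W$ is a border strictly between $U$ and $V$, contradicting consecutiveness. If $V \subsetneq W$, the border property of $V$ applied with $X = U$ and $Y = W$ gives $\freq{V \setminus U} > \freq{W \setminus V}$, and the decomposition $W \setminus U = (V \setminus U) \sqcup (W \setminus V)$ forces $\freq{W \setminus U} < \freq{V \setminus U}$, contradicting $W = \maxc{U}$. Hence $W = V$.

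The claim $U = \minc{V}$ follows by the symmetric argument with minima and the smallest-tiebreaker: letting $W' = \minc{V}$, the same case analysis (splitting on whether the witness $Y_0$ in the hypothetical failure of $W'$ being a border lies in $V$, contains $V$, or is incomparable to $V$, with the incomparable case passing through $Y_0 \cap V$ and $V \cup Y_0$) shows $W'$ is a border, and then consecutiveness forces $W' = U$. The main obstacle in both directions is the subcase where the witness is incomparable to $U$ (respectively $V$): maximality (respectively minimality) cannot be invoked directly, and one must thread the weighted-average identity through both the intersection and the union of the witness with $U$ (respectively $V$) to extract a contradiction.
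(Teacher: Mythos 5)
Your proof is correct and follows essentially the same route as the paper: both establish that $\maxc{U}$ is itself a border by showing that every inner cut $W \sm X$ has density at least $\freq{W \sm U}$ while every outer extension $Y \sm W$ has density strictly below it (using the border property of $U$ for the witnesses not containing $U$), and then identify $\maxc{U}$ with $V$ via the chain structure and consecutiveness. The only difference is organizational: the paper routes the inner-cut bound through $\minc{W}$ and a pair of auxiliary lemmas, whereas you handle the two cases ($X \supseteq U$ or not) directly with the same weighted-average identities.
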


We prove this proposition in Appendix~\ref{sec:propmove}.

This proposition has an important corollary that shows
why we can ignore the monotonicity condition in Problem~\ref{prob:seqseg}.
Indeed segments between borders will have automatically
decreasing average.

\begin{corollary}
\label{cor:monotone}
Let $U, V, W \in \brd{D}$ be three consecutive borders, $U \subsetneq V \subsetneq W$.
Then $\freq{V \sm U} > \freq{W \sm V}$.
\end{corollary}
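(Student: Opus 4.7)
The statement to prove is that if $U \subsetneq V \subsetneq W$ are three consecutive borders, then $\freq{V \sm U} > \freq{W \sm V}$. My plan is to use the definition of a border applied to the middle border $V$, with $U$ and $W$ serving as the witnessing inner and outer corners.

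First I would recall that, by definition, $V \in \brd{D}$ means there do not exist corners $X \subsetneq V \subsetneq Y$ with $\freq{Y \sm V} \geq \freq{V \sm X}$. Since $U$ and $W$ are borders they are in particular corners, and by the hypothesis of the corollary we have strict inclusions $U \subsetneq V \subsetneq W$. So $X := U$ and $Y := W$ are admissible witnesses for the definition of border at $V$. The definition then prohibits $\freq{W \sm V} \geq \freq{V \sm U}$, which is exactly the desired strict inequality $\freq{V \sm U} > \freq{W \sm V}$.

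Notice that the consecutiveness of the borders is used only to ensure that no border sits strictly between $U$ and $V$ or between $V$ and $W$; the actual argument needs only that $V$ is a border and that $U, W$ are corners properly sandwiching it. So Proposition~\ref{prop:move} is not strictly needed for this corollary---the conclusion follows directly from the definition of border---though it does explain why $V \sm U$ and $W \sm V$ are the natural objects to compare (they are the segments between consecutive borders in the chain of Proposition~\ref{prop:chain}). There is no real obstacle here: the entire content of the corollary is packaged into the definition of border, once one recognizes that the middle element of a border chain is itself a border and is therefore forbidden from having an outer increment that is at least as dense as its inner decrement.
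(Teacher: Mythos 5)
Your proof is correct, but it takes a genuinely different and more direct route than the paper. The paper derives the corollary from Proposition~\ref{prop:move}: since $U$ and $V$ are consecutive borders, $V = \maxc{U}$, and then Lemma~\ref{lem:split} (applied with $X = W \supseteq U$, $W \sm V \neq \emptyset$) gives $\freq{W \sm V} < \freq{V \sm U}$. You instead read the inequality straight off the definition of border applied to the middle corner $V$, taking $X = U$ and $Y = W$ as witnesses: if $\freq{W \sm V} \geq \freq{V \sm U}$ held, $V$ would fail to be a border. This is valid --- borders are in particular corners, the inclusions are strict by hypothesis, and the definition quantifies over \emph{all} such pairs $(X, Y)$ --- and your observation that consecutiveness is not actually needed (only that $V$ is a border sandwiched between two corners) is a genuine strengthening. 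What the paper's longer route buys is consistency with the surrounding machinery: Proposition~\ref{prop:move} and Lemma~\ref{lem:split} are developed anyway for the border-discovery algorithms, so phrasing the corollary as a consequence of $V = \maxc{U}$ reinforces the picture that consecutive borders are linked by the maximal-corner operation. Your argument is the more economical one for the statement as given.
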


\begin{proof}
Proposition~\ref{prop:move} implies that $\maxc{U} = V$.
Since $V \subsetneq W$, Lemma~\ref{lem:split} (given in Appendix~\ref{sec:propmove}) implies that
$\freq{V \sm U} > \freq{W \sm V}$.
\qed\end{proof}

\subsection{Exact discovery}\label{sec:exact}

In this section we present an algorithm for computing the borders.

Discovering borders exactly is in fact an instance of isotonic regression for a
grid. In this regression we are given a grid, and a set of values on each grid
entry.  The goal is to find another set of values such that they decrease as we
move towards the one selected corner and $L_2$ error is minimized as we move
towards that corner.

\begin{problem}
Let $D$ be a dataset of size $M \times N$. Find a function $f$ such that 
$f(x, y) \geq f(x, y + 1)$ and $f(x, y) \geq f(x + 1, y)$ minimizing the cost
\[
	\sum_{(x, y)} (f(x, y) - D(x, y))^2\quad.
\]
\end{problem}

Once these these values are discovered, we can reconstruct the borders using the following proposition.

\begin{proposition}
Let $D$ be a dataset and let $f$ be the solution to the grid isotonic regression.
Let $B$ be a border. Then there is $\sigma$ such that $B = \set{(x, y) \mid f(x, y) \geq \sigma}$.
\end{proposition}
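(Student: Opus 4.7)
The plan is to exhibit an explicit $\sigma$ --- namely $\sigma = \freq{B \setminus \minc{B}}$, the ``inner'' average singled out by the border definition --- and to show both $B \supseteq \{(x,y) : f(x,y) \geq \sigma\}$ and $B \subseteq \{(x,y) : f(x,y) \geq \sigma\}$ directly. The key tool is the classical max--min formula for isotonic regression on a finite poset (Brunk; Robertson--Wright--Dykstra), which in our antitone/corner setup reads
$$f(x,y) \;=\; \max_{\substack{L\ \text{corner}\\ (x,y)\in L}}\; \min_{\substack{X\ \text{corner}\\ (x,y)\notin X}}\ \freq{L \setminus X},$$
with $L \setminus X$ automatically non-empty since $(x,y)$ lies in it. I would invoke this as a known result.

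For $(x,y) \in B$ I would choose $L = B$ in the outer max. The inner min then runs over corners $X$ with $(x,y) \notin X$, in particular over all corners $X \subsetneq B$, so by the definition of $\minc{B}$ it is at least $\freq{B \setminus \minc{B}} = \sigma$, giving $f(x,y) \geq \sigma$. For $(x,y) \notin B$ I would exhibit, for any corner $L \ni (x,y)$, the witness $X = L \cap B$: this $X$ is a corner (intersection of corners), does not contain $(x,y)$, and satisfies $L \setminus X = L \setminus B$. Setting $Y = L \cup B$ gives a corner strictly containing $B$ with $Y \setminus B = L \setminus B$, so by the definition of $\maxc{B}$ we have $\freq{L \setminus B} \leq \freq{\maxc{B} \setminus B}$, and the border inequality $\freq{\maxc{B} \setminus B} < \freq{B \setminus \minc{B}} = \sigma$ forces $f(x,y) < \sigma$. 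Combining the two bounds yields $B = \{(x,y) : f(x,y) \geq \sigma\}$.

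The main obstacle is stating and citing the max--min formula in precisely the form above: the general result is for isotonic regression with respect to an arbitrary partial order, and one must align ``antitone on the componentwise grid order'' with ``corners play the role of lower sets'' without sign errors. Should a self-contained argument be preferred, one can instead exploit the pool-adjacent-violators structure of the optimal $f$: its distinct level sets form a chain of corners on each of which $f$ is constant and equal to the local average, and a small variational argument then shows that no corner strictly inside a level block can satisfy both halves of the border inequality simultaneously, so the chain-of-borders property (Proposition~\ref{prop:chain}) forces any border to coincide with one of the level sets. Either route concentrates the real work in the same place: translating the variational optimality of $f$ into the strict border inequality.
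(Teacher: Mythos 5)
Your argument is correct in substance but takes a genuinely different route from the paper. The paper sets $\sigma = \min_{(x,y)\in B} f(x,y)$ and argues variationally: if the set $Y$ of entries outside $B$ with $f \geq \sigma$ were nonempty, then $B \cup Y$ would be a corner (level sets of a monotone $f$ are corners), the border property of $B$ would force $\freq{Y} < \sigma \leq f$ on $Y$, and lowering $f$ by a small constant on $Y$ would preserve monotonicity while strictly decreasing the squared error, contradicting optimality of $f$. That proof is self-contained and uses only first-order optimality of the regression. You instead import the max--min representation of the $L_2$ isotonic fit; your translation is the right one (corners are exactly the upper sets of the reversed grid order), and it buys something stronger: an explicit threshold $\sigma = \freq{B \sm \minc{B}}$ together with the sandwich $f \geq \freq{B\sm\minc{B}}$ on $B$ and $f \leq \freq{\maxc{B}\sm B}$ off $B$, with the border inequality $\freq{\maxc{B}\sm B} < \freq{B\sm\minc{B}}$ separating the two. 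The price is a nontrivial external theorem where the paper needs none. Two small repairs are needed. First, in your $B$-side bound the inner minimum runs over \emph{all} corners $X$ with $(x,y)\notin X$, not only those contained in $B$; a minimum over a larger family can only be smaller, so you must add the reduction $\freq{B\sm X} = \freq{B\sm(X\cap B)} \geq \freq{B\sm\minc{B}}$, which holds because $X\cap B$ is a proper sub-corner of $B$ --- the same intersection trick you already deploy on the other side. Second, the extreme borders $B=\emptyset$ and $B=A$ (both vacuously borders) have no $\minc{B}$, respectively $\maxc{B}$, so there $\sigma$ must be chosen above $\max f$, respectively at most $\min f$.
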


\begin{proof}
Let $m = \min_{(x, y) \in B} f(x, y)$. Define $Y = \set{(x, y) \notin B \mid
f(x, y) \geq m}$. We need to show that $Y = \emptyset$.  Assume otherwise.
Note that since $f$ is monotonic, $B \cup Y$ is a corner.
Since $B$ is a border, Proposition~\ref{prop:move} implies that $\freq{Y} < m$.
An entry, if such exists outside $B \cup Y$ must be lower than $m$. Hence, we can
decrease the values of $f$ in $Y$ by a small amount, say $\epsilon$, without violating the monotonic
constraint. Let us consider the effect. In order to do this, consider the
following derivative,
\[
	\frac{d}{dc}\sum_{(x, y) \in Y}(f(x, y) - c - D(x, y))^2 =  2\sum_{(x, y) \in Y} -f(x, y) + D(x, y) < 0,
\]
where the inequality follows from the fact that $\freq{Y} < m \leq f(x, y)$ for any $(x, y) \in Y$.
Hence, we can decrease the values of $f$ in $Y$ by a small amount such that the monotonicity still holds
and the score is decreased. This contradicts the fact that $f$ is the optimal solution.
\qed\end{proof}

This proposition gives us a simple approach to discover borders by varying $\sigma$.

Finding the solution for grid isotonic regression can be done in $O((NM)^2)$
time and $O(NM)$ space by an algorithm of~\citet{spouge:03:isotonic}. However,
the time complexity of the algorithm is overly pessimistic. The algorithm runs
in $O((NM)L)$, where $L$ is the number of borders. The number of borders is
$NM$, at worst, but in practice it is much smaller. The algorithm
of~\citet{spouge:03:isotonic} is a conquer-and-divide algorithm. The running
time $O((NM)L)$ is based on the pessimistic assumption that each division is
imbalanced, that is, only one point is separated. If these divisions are
(nearly) balanced, then in practice the running time will be closer to
$O((NM)\log L)$.  Additional speed-ups are possible if instead of considering
the full matrix of size $NM$ we first compute the smallest corner containing
the whole data, and apply the algorithm to the corner. The points that are left
outside constitute a border with $0$ average. This trick may speed-up the search
significantly but it is highly vulnerable to noise as one non-zero point is enough to
counter this optimization.

\subsection{Heuristic discovery}
\label{sec:heuristic}

The computational complexity of isotonic regression may become too high in
practice, especially due to the $O(NM)$ term.
Hence, in this section we present a practical heuristic approach. The idea here
is to sort individual entries of $D$ into a sequence. Once we have this
sequence we can use it to compute candidates for borders. We can then use these
candidates to rearrange the entries again, and repeat the procedure until convergence.

To make this more formal,
assume that we are given a dataset $D$ of size $M \times N$.  Let $T = \enpr{t_1 = (x_1,
y_1)}{t_{NM} = (x_{NM}, y_{NM})}$ be a sequence of all entries of $D$. We say
that $T$ is a \emph{monotonic entry order} if for any $k$, $1 \leq k \leq NM$,
the set $t_1, \ldots, t_k$ is a corner.

Given a monotonic entry order $T = t_1, \ldots, t_{NM}$ and an integer $i$,
we define
\[
	\maxc{i; T} = \set{t_1, \ldots, t_j}, \ \text{where}\  j = \arg \max_{j > i} \freq{t_{i + 1}, \ldots, t_j}\ .
\]
If there are ties, we select the largest index.

Given a monotonic entry order $T$, consider the following process.  Set $U_0 =
\emptyset$ and then iteratively compute $U_i = \maxc{\abs{U_{i - 1}}; T}$ until
we reach $U_L$ containing all the entries.  We will write $\brd{T} = U_0,
\ldots, U_L$.

We say that a monotonic entry order $T = t_1, \ldots, t_{NM}$ is \emph{compatible} with
$\brd{D}$ if for each border $U \in \brd{D}$ there exists an index $k$ such
that $t_1, \ldots, t_k = U$. Such order always exists.

Assume that we are given an order $T$ compatible with $\brd{D}$.  Select and
fix a border $U_i \in \brd{D}$. Let $k = \abs{U_i}$.  We must have $U_{i + 1} =
\maxc{k; T}$.  This implies that if we are given an order $T$ that is
compatible with $\brd{D}$, then $\brd{T} = \brd{D}$.

A naive approach to compute an individual $\maxc{i; T}$ requires $O(NM)$ time,
however, we can compute $\maxc{i; T}$ for each $i$ \emph{simultaneously} in
$O(NM)$ total time using the approach given by~\citet{calders:07:mining}, where
the goal of the authors was to discover the densest interval, essentially
$\maxc{i; T}$, from a stream $T$ in an amortized constant time for each $i$.
This algorithm is actually a variation of PAVA algorithm for solving isotonic
regression for a total order, see~\citep{ayer:55:pav}, though the goal and the
output of the algorithms are different. Connection to isotonic regression is
natural as we saw previously that the exact borders can be discovered by
solving the grid isotonic regression in $O(N^2M^2)$ time.  Since we no longer
consider a grid but a monotonic entry order, we can use a more efficient
algorithm whose copmputational complexity is $O(NM)$.

Let us now consider the optimization problem from an another angle. Assume that
we do not know $\brd{D}$ but instead we know only the average of values in segments,
that is, for each entry $(i, j)$, we know the average, say $w(i, j)$, of the
segment that contains $(i, j)$. We can construct the order compatible with
$\brd{D}$ from the weights. Corollary~\ref{cor:monotone} implies that weights
should decrease. Hence, if we build a monotonic entry order by greedily selecting 
the entry with the largest weight while at the same time making sure that 
the order is indeed monotonic, we end up with an order that is compatible with $\brd{D}$.
We present the pseudo-code for this approach in Algorithm~\ref{alg:findorder}.

\begin{algorithm}
\caption{\findorder, constructs a monotonic order from weights}
\label{alg:findorder}
\Input{weights $w$}
\Output{a monotonic entry order $T$}
$H \define {(1, 1)}$; $T \define$ empty list\;
\While {$H$ is not empty} {
	pop $(i, j)$ from $H$ with the largest weight $w(i, j)$\;
	add $(i, j)$ to $T$\;
	mark $(i, j)$ as visited\;
	\If {$i < M$ \AND ($j = 0$ \OR $(i + 1, j - 1)$ is marked} {
		add $(i + 1, j)$ to $H$\;
	}
	\If {$j < N$ \AND ($i = 0$ \OR $(i - 1, j + 1)$ is marked} {
		add $(i, j + 1)$ to $H$\;
	}
}
\Return $T$\;
\end{algorithm}

We can now construct our algorithm for discovering approximate borders.
Given a monotonic entry order $T = t_1, \ldots, t_{NM}$, we can compute the
borders $\brd{T} = U_0, \ldots, U_L$. Assume that we are given an entry
$p$. Let $U_k \in \brd{T}$ be the border such that $p \in U_k \sm U_{k - 1}$.
Let us define $w_S(p; T) = \freq{U_k \sm U_{k - 1}}$ to be the density
of the segment containing $p$.
Once these weights are computed, we can use them to compute
a new order using \findorder. Computing weights can be done in $O(NM)$ time
while computing a new order can be done in $O(NM \log \min(N, M))$ time.

Using just $w_S$ is problematic in practice. The reason for this is that during
\findorder there will be often several entries in $H$ that belong to
the same segment, and so will have the same values of $w_S$. Hence, we need a
weight function that that will break these ties.  Breaking ties properly is
important since it is possible to design a weight $w$ that $T = \findorder(w(\cdot; T))$ for any $T$. In other words, iterating \findorder 
and computing weights will never improve the current order.

Given an order $T$, we say that a weight function $w$ is a \emph{tie-breaker}
if $w_S(p; T) < w_S(q; T)$ implies $w(p; T) < w(q; T)$, where $p$ and $q$ are
entries in $T$.  Before considering any specific tie-breakers, let us first
show that using a tie-breaker $w$ leads to a convergence.

\begin{proposition}
\label{prop:converge}
Let $w$ be a tie-breaker. Let $T^0$ be any monotonic entry order and define
$T^{i + 1} = \findorder\fpr{w(\cdot; T^i)}$.  Then there exists $k$ such
that $\brd{T^i} = \brd{T^k}$ for any $i \geq k$.
\end{proposition}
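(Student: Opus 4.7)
The plan is to exhibit a bounded, non-decreasing potential function along the iteration and argue that stabilization of the potential forces stabilization of the borders. Define
\[
\Phi(T) = \sum_{j} |C_j|\,\freq{C_j}^2,
\]
where $C_1,\ldots,C_L$ are the segments of $\brd{T}$, and let $f^{T}(p)=\freq{C_j}$ for $p\in C_j$. A direct computation gives $\Phi(T)=\sum_p f^{T}(p)^2=\sum_p f^{T}(p)D(p)$, so maximizing $\Phi$ is equivalent to minimizing the squared error $\sum_p (f^{T}(p)-D(p))^2$. As noted just before Algorithm~\ref{alg:findorder}, iterating $\maxc{\cdot;T}$ is precisely PAVA on the sequence $D(t_1),\ldots,D(t_{NM})$, so $f^{T}$ is the unique non-increasing isotonic regression of $D$ along the total order $T$, and $\Phi(T)$ is its explained sum of squares.

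The core step is showing $\Phi(T^{i+1})\geq \Phi(T^i)$. I would argue that \findorder, fed the weights $w(\cdot;T^i)$, visits the segments of $\brd{T^i}$ in order of decreasing density, i.e. all of $C_1$ first, then $C_2$, and so on. Let $V$ be the visited set at some step, and suppose $V\subsetneq U_j=C_1\cup\cdots\cup C_j$ but $V$ already contains $U_{j-1}$. Since both $V$ and $U_j$ are corners, choose $(a,b)\in U_j\setminus V$ with minimum $a+b$; by minimality, $(a-1,b)$ and $(a,b-1)$ both lie in $V$, so by the construction of \findorder the entry $(a,b)$ is in the heap $H$. The tie-breaker property guarantees that its weight strictly exceeds the weight of any entry from $C_{j+1},\ldots,C_L$ in $H$, so the algorithm pops such a frontier entry inside $C_j$ next. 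Iterating, $T^{i+1}$ traverses $C_1,C_2,\ldots,C_L$ in order, hence $f^{T^i}$ is non-increasing along $T^{i+1}$. Since $f^{T^{i+1}}$ is the optimal such fit, its squared error is at most that of $f^{T^i}$, giving $\Phi(T^{i+1})\geq \Phi(T^i)$.

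A finite grid admits only finitely many possible border chains, so $\Phi$ takes only finitely many values; combined with the bound $\Phi(T)\leq \sum_p D(p)^2$ and monotonicity, the sequence $\Phi(T^i)$ is eventually constant, say for $i\geq k$. It remains to turn equality $\Phi(T^{i+1})=\Phi(T^i)$ into equality of borders: under this equality, $f^{T^i}$ attains the optimum of the isotonic regression along $T^{i+1}$, so by uniqueness of PAVA, $f^{T^{i+1}}=f^{T^i}$. Corollary~\ref{cor:monotone} tells us that the segments of $\brd{T}$ have strictly decreasing averages and are therefore exactly the maximal level sets of $f^{T}$; hence both border chains arise from the same function and coincide, giving $\brd{T^{i+1}}=\brd{T^i}$.

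The main obstacle is the greedy-selection argument in the second paragraph: since the heap in \findorder contains only frontier entries rather than all unvisited entries, one must genuinely exploit the corner property of each $U_j$ to guarantee that a frontier entry inside the current segment is in the heap precisely when the algorithm needs to pop one. Once this is in place, the remainder is a standard potential-function/finiteness argument paired with uniqueness of the isotonic regression fit.
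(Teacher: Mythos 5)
Your proof is correct, and its core combinatorial step coincides with the paper's, but the potential-function argument wrapped around it is genuinely different. Both proofs hinge on the same lemma: because segment densities of $\brd{T^i}$ strictly decrease and $w$ is a tie-breaker, \findorder pops every entry of a segment before touching any later segment, so $T^{i+1}$ traverses the segments of $\brd{T^i}$ in block order (you actually prove the heap-availability part of this more carefully than the paper, which only asserts that $p\in B_j$ precedes $q\notin B_j$ in the new order). After that the two arguments diverge. The paper stays purely combinatorial: it encodes each border chain as the vector $\pr{\freq{B_1\sm B_0}, \abs{B_1}, \freq{B_2\sm B_1}, \abs{B_2},\ldots}$ and shows, by comparing the first position where $\brd{T^i}$ and $\brd{T^{i+1}}$ differ and invoking the maximality in $\maxc{\cdot;T}$, that this vector strictly increases lexicographically whenever the borders change; finiteness then concludes. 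You instead use the scalar potential $\Phi(T)=\sum_j\abs{C_j}\freq{C_j}^2$, identify $\brd{T}$ with the pools of the antitonic regression of $D$ along $T$, observe that the old fit remains feasible for the new order (by the block-order lemma), and settle the equality case by uniqueness of the least-squares projection onto the monotone cone. What the paper's version buys is self-containment: it never needs the variational characterization of the iterated $\maxc{\cdot;T}$ construction as the optimal antitonic fit, a standard but nontrivial fact that the paper only gestures at informally in Section~\ref{sec:heuristic} and that your argument genuinely depends on (both for the inequality and for the uniqueness step). What your version buys is a single monotone scalar invariant with a clean variational meaning, which arguably explains \emph{why} the iteration converges. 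Two cosmetic points: the strict decrease of segment averages you need is for $\brd{T^i}$ with $T^i$ an arbitrary monotonic entry order, which follows from the largest-index tie-breaking in $\maxc{i;T}$ rather than from Corollary~\ref{cor:monotone} as literally stated (the paper's own proof has the same implicit reliance); and the boundedness of $\Phi$ is redundant once you note it takes finitely many values.
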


\begin{proof}
Fix $m$
and let $U = T^m$ and $V = v_1, \ldots, v_{NM} = \findorder\fpr{w(\cdot; U)}$.
Let $B_0, \ldots, B_L = \brd{U}$ and define $C_0, \ldots, C_K = \brd{V}$.
Define a vector $\alpha$ of length $2L$ such that $\alpha_{2i - 1} = \freq{B_i \sm B_{i - 1}}$
and $\alpha_{2i} = \abs{B_i}$. Define similarly $\beta$ using $\brd{V}$.

Assume two entries $p$ and $q$ such that $p \in B_i$ and $q \notin B_i$.  This
means that $w_S(p; U) > w_S(q; U)$.  Since $w$ is a tie-breaker, this means
that $p$ will occur earlier than $q$ in $V$. In other words, $\{v_1, \ldots, v_{\abs{B_j}}\} = B_j$ as sets, for any $j = 1,\ldots, L$.

If $\brd{U} = \brd{V}$, then it follows immediately that $\alpha = \beta$.

Assume that $\brd{U} \neq \brd{V}$ and
let $j$ be the largest index such that $B_i = C_i$ for any $i \leq j$.
This implies that $\alpha_i = \beta_i$, for $i \leq 2j$.
Since $\{v_{\abs{B_j} + 1}, \ldots, v_{\abs{B_{j + 1}}}\} = B_{j + 1} \sm B_j$ and $C_j = B_j$,
we know that $C_{j + 1} \sm C_j$ will be as dense as $B_{j + 1} \sm B_j$, that is,
it follows that either $\freq{C_{j + 1} \sm C_j} > \freq{B_{j + 1} \sm B_j}$,
or $\freq{C_{j + 1} \sm C_j} = \freq{B_{j + 1} \sm B_j}$ and $\abs{C_{j + 1}} > \abs{B_{j + 1}}$.
That is, either $\alpha_{2j + 1} < \beta_{2j + 1}$, or $\alpha_{2j + 1} = \beta_{2j + 1}$
and $\alpha_{2j + 2} < \beta_{2j + 2}$. Consequently, $\beta$ is larger than $\alpha$ with
respect to the lexicographical order. 

We have shown that if $\brd{T^i} \neq \brd{T^{i + 1}}$, then for any $j >
i$, $\brd{T^i} \neq \brd{T^j}$. Since there are only finite number of possible borders,
there exists an index $k$ such that $\brd{T^i} = \brd{T^k}$ for any $i \geq k$.
\qed
\end{proof}

We consider two tie-breakers. The first tie-breaker, $w_R$ breaks the ties in a
random order. Formally, we define $w_R(p; T) = (w_S(p; T), r)$ to be a vector
of length 2, where $T$ is an order, $p$ is an entry in $T$, and $r$ is a random
number between $0$ and $1$.  The weights are compared in lexicographical order.
This immediately implies that $w_R$ is a tie-breaker. Our second tie-breaker
tries to flip the original order as much as possible. Formally, we define
$w_F(p; T) = (w_S(p; T), i)$, where $T$ is an order, $p$ is an entry in $T$,
and $i$ is the index of $p$ in $T$.  Obviously, $w_F$ is a tie-breaker, and it
will favor the entries that appear later in $T$.

Proposition~\ref{prop:converge} states that the iteration will converge.
However, we need means to detect this convergence. This is difficult since
while the borders themselves will converge, the actual orders may cycle.
Fortunately, we can show that $w_F$ has a cycle of at most 2, we prove this proposition in Appendix~\ref{sec:proofflipcycle}.

\begin{proposition}
Let $T^0$ be any monotonic entry order and define iteratively
$T^{i + 1} = \findorder\fpr{w_F(\cdot; T^i)}$.
There exists $m$ such that $T^{m + 2} = T^m$.
\label{prop:flipcycle}
\end{proposition}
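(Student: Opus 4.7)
The plan is to first freeze the borders via Proposition~\ref{prop:converge}, then reduce the claim to a per-segment statement about linear extensions of a poset, and finally attack that statement with a lexicographic invariant in the spirit of Proposition~\ref{prop:converge}'s proof.

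Since $w_F$ is a tie-breaker, Proposition~\ref{prop:converge} gives an index $k$ with $\brd{T^i} = \brd{T^k}$ for every $i \geq k$. Let $U_0 \subsetneq \cdots \subsetneq U_L$ be the common borders and $C_j = U_j \setminus U_{j - 1}$ the resulting segments. The proof of Proposition~\ref{prop:converge} further shows that for $i \geq k$ the first $\abs{U_j}$ entries of $T^i$ are exactly $U_j$ as a set, so across segments the ordering is already frozen; only the relative orderings inside each $C_j$ can still change. Moreover, on $C_j$ every entry shares the same $w_S$-value $\freq{C_j}$, so $w_F$ collapses to ranking by index in $T^i$ with later indices preferred. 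Consequently, when \findorder fills in the block of $T^{i + 1}$ corresponding to $C_j$ it runs a pure greedy on the coordinate-wise poset $P_j$ of $C_j$: at each step it picks, among the entries of $C_j$ whose coordinate predecessors inside $C_j$ have already been placed, the one with the largest $T^i$-index. Call this operation $\pi$ on the set of linear extensions of $P_j$; because the segments do not interact, the proposition follows once we show that for every finite poset $P$ and every starting linear extension $\sigma^0$, the orbit $\sigma^{i + 1} = \pi(\sigma^i)$ satisfies $\sigma^{m + 2} = \sigma^m$ for some $m$.

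To prove this per-segment claim I would mirror the lexicographic argument of Proposition~\ref{prop:converge}. Attach to each linear extension $\sigma$ of $P$ an invariant $\gamma(\sigma)$ that can be compared lexicographically; a natural candidate records, antichain by antichain as the frontiers arise during $\pi(\sigma)$'s construction, the index in $\sigma$ of the element selected by $\pi$. I would then show that two applications of $\pi$ either strictly advance $\gamma$ or leave it unchanged, and that $\gamma(\pi^2(\sigma)) = \gamma(\sigma)$ forces $\pi^2(\sigma) = \sigma$, i.e., $\sigma$ sits on a $\pi$-cycle of length one or two. Since $\gamma$ takes only finitely many values, iteration must eventually land on such a cycle.

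The hard part will be verifying monotonicity of $\gamma$ under $\pi^2$, because the two applications of $\pi$ interact with the poset asymmetrically. Two elements $x$ and $y$ that are incomparable in $P$ and share a frontier during $\pi(\sigma)$'s construction are always picked in order of their $\sigma$-positions, yet during $\pi^2(\sigma)$'s construction $x$ and $y$ may never sit in the frontier simultaneously, because their predecessors can be pulled in at very different speeds by the latest-first rule. Handling the case where one of $x, y$ is picked during $\pi^2(\sigma)$'s construction before the other's predecessors have all been placed --- and showing that whenever a pair does end up re-ordered the invariant $\gamma$ strictly advances --- is the delicate step, and presumably occupies most of the appendix proof.
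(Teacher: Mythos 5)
Your first step---freezing the borders via Proposition~\ref{prop:converge}, observing that for $i \geq k$ the prefix sets $\set{v_1, \ldots, v_{\abs{U_j}}}$ are fixed, and reducing the claim to the dynamics of a latest-first greedy $\pi$ on the linear extensions of each segment's coordinate-wise poset---is correct and coincides with how the paper's proof begins. The problem is that everything after that reduction is the actual content of Proposition~\ref{prop:flipcycle}, and you do not supply it. The invariant $\gamma$ is never pinned down, the claim that $\pi^2$ lexicographically advances it is not proved, and the implication $\gamma(\pi^2(\sigma)) = \gamma(\sigma) \Rightarrow \pi^2(\sigma) = \sigma$ is not proved either; you explicitly flag the case where two incomparable entries never co-occupy the frontier as the unresolved ``delicate step.'' Note also that the lexicographic argument of Proposition~\ref{prop:converge} does not transfer in any routine way: there the progress measure (segment densities and sizes) is forced to increase by the defining property of a tie-breaker, whereas inside a single segment every entry is an exact $w_S$-tie, so the only structure left is the poset and the index order, and it is precisely the interaction between these two that you leave unanalyzed. (A small additional slip: since $w_F$ favors entries appearing \emph{later} in $T^i$, two entries sharing a frontier are emitted in \emph{reverse} $\sigma$-order, not in $\sigma$-order.)

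The paper closes this gap by a different device: induction on $\abs{C}$. It deletes a sink $p$ of the DAG on $C$; because a sink never unlocks any other entry, deletion commutes with the dynamics, so $V^i_p$ equals $V^i$ with $p$ removed, and by induction each punctured order eventually satisfies $V^{m}_p = V^{m+2}_p$. A case analysis on the number of sinks (one, two, or at least three---the last case reconstructing the relative order of sinks from the punctured orders $V^m_r$) then pins down the final entry of $V^m$ and yields $V^{m+2} = V^m$. Without either that induction or a genuine proof of your monotonicity step, the proposal does not establish the proposition.
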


We can easily detect convergence for $w_F$ by remembering the second
last order and compare it to the current one. If the orders are the same,
then we have converged.

Unfortunately, we cannot detect convergence easily for $w_R$. Hence we adopt
the following hybrid approach. We begin with an random monotonic order, and
apply $w_F$ until convergence. Once converged, we apply $w_R$ once, and repeat
applying $w_F$ until we converge again. We repeat this until the borders have not
changed after 20 applications of $w_R$. This is summarized in Algorithm~\ref{alg:heuristic}.

\begin{algorithm}[ht!]
assign all entries to a single border\;
\While {the borders have not changed for $L$ iterations} {
	apply \findorder using random tie-breaker $w_R$\;
	\While {convergence} {
		find borders in ordered entries\;
		apply \findorder using flip tie-breaker $w_F$\;
	}
}
\caption{Heuristic discovery of borders}
\label{alg:heuristic}
\end{algorithm}

\subsection{Using sparsity to speed-up the discovery}\label{sec:sparse}
So far we have operated with ordinary matrices, and consequently the running
time for our heuristic discovery is $O(N^2\log N)$ when applied to a graph with $N$
vertices.  In practice, graphs are sparse and we can use this sparsity to our
advantage.

Assume that we are given a graph $G$ with positive weights and a corner $B$. Let us define
\[
	F = \set{(x, y) \in B \mid (x + 1, y) \notin B \text{ and } (x, y + 1) \notin B}
\]
to be the set of \emph{frontier} points, that is, points in $B$ from which you
cannot advance away from the diagonal, see Figure~\ref{fig:sparse:a}. Now, Proposition~\ref{prop:move} implies
that if $B$ is a border, then $F$ must be a subset of edges. Otherwise, we can
always remove a non-edge frontier point and by doing so increase the density of
$B$. Note also that given $F$, we can always recover $B$ by taking the smallest
corner containing $F$.

This suggests that instead of ordering all entries of the adjacency matrix it is
enough to order just the edges. This, however, poses two complications. First,
we need to be able keep the edges in a monotonic order. While this was easy when
dealing with the full matrix since visiting one entry revealed only 2 adjacent
entries, at maximum, visiting an edge may reveal any number of unvisited edges.
Secondly, in order to compute the density of $B$, whenever we visit an edge we
need to compute the number of non-edges we visited implicitly in order to reach
that edge, see Figure~\ref{fig:sparse:c}. 

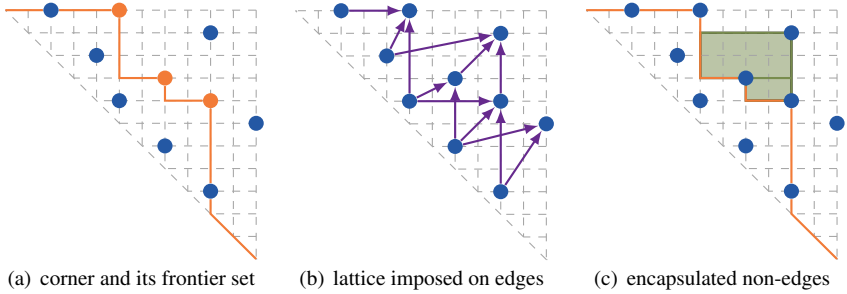
\begin{figure}[htb!]
\tikzstyle{ledge} = [draw, thick, >=latex, ->, yafcolor1]
\tikzstyle{bedge} = [draw, thick, yafcolor2]

\hfill
\subfigure[corner and its frontier set\label{fig:sparse:a}]{
\begin{tikzpicture}[scale = 0.3, xscale = -1]

\drawgrid{0}{11}{dashed, yafaxiscolor!50}

\drawpoints{fill, circle, yafcolor5, inner sep = 2pt}{n}{0/6, 2/3, 2/10, 4/5, 6/7, 7/9, 9/11}
\drawpoints{fill, circle, yafcolor2, inner sep = 2pt}{m}{2/7, 6/11, 4/8}

\draw[bedge] (0, 0) -- (2, 2) -- (n2) -| (m1) -| (m3) -| (m2) -- (n7) -- (11, 11);

\end{tikzpicture}}\hfill
\subfigure[lattice imposed on edges\label{fig:sparse:b}]{
\begin{tikzpicture}[scale = 0.3, xscale = -1]

\drawgrid{0}{11}{dashed, yafaxiscolor!50}

\drawpoints{fill, circle, yafcolor5, inner sep = 2pt}{n}{0/6, 2/3, 2/10, 2/7, 4/5, 6/11, 6/7, 7/9, 9/11, 4/8}

\draw[ledge] (n2) edge (n1);
\draw[ledge] (n5) edge (n1);
\draw[ledge] (n2) edge (n4);
\draw[ledge] (n4) edge (n3);
\draw[ledge] (n5) edge (n4);
\draw[ledge] (n7) edge (n4);
\draw[ledge] (n8) edge (n3);
\draw[ledge] (n9) edge (n6);
\draw[ledge] (n7) edge (n6);
\draw[ledge] (n8) edge (n6);
\draw[ledge] (n10) edge (n3);
\draw[ledge] (n5) edge (n10);
\draw[ledge] (n7) edge (n10);
\end{tikzpicture}}\hfill
\subfigure[encapsulated non-edges\label{fig:sparse:c}]{
\begin{tikzpicture}[scale = 0.3, xscale = -1]

\draw[fill = yafcolor3!50, thick] (2, 7) -| (4, 8) -| (6, 10) -- (2, 10) -- (2, 7);

\drawgrid{0}{11}{dashed, yafaxiscolor!50}

\draw[yafcolor3, thick] (2, 7) -| (4, 8) -| (6, 10) -- (2, 10) -- (2, 7);
\draw[yafcolor3, thick] (4, 8) -- (2, 8);

\drawpoints{fill, circle, yafcolor5, inner sep = 2pt}{n}{0/6, 2/3, 2/10, 4/5, 6/7, 7/9, 9/11}
\drawpoints{fill, circle, yafcolor5, inner sep = 2pt}{m}{2/7, 6/11, 4/8}

\draw[bedge] (0, 0) -- (2, 2) -- (n2) -| (m1) -| (m3) -| (m2) -- (n7) -- (11, 11);
\end{tikzpicture}}
\hspace*{\fill}
\caption{Toy examples related to computing borders from sparse graphs}
\label{fig:sparse}
\end{figure}

In order to solve the first problem, we define a lattice structure on the
\emph{edges}.  More formally, we define a directed acyclic graph $L$ such that
the vertices correspond to the edges of $G$ and the edges in $E(L)$ are formed as
follows: vertex $(x_1, y_1)$ is connected to $(x_2, y_2)$ in $L$ if $x_1 \leq
x_2$ and $y_1 \leq y_2$ and there is no third vertex $(x_3, y_3)$ such that
$x_1 \leq x_3 \leq x_2$ and $y_1 \leq y_3 \leq y_2$, see Figure~\ref{fig:sparse:b}, for example.
In order to guarantee that the edges are visited in a monotonic order, we simply
visit the edges in a topological order of $L$. This modification of \findorder can be done
to run in $O(\abs{E(L)} + \abs{E(G)}\log \abs{V(G)})$ time. Moreover, the lattice can be constructed
in $O(\abs{E(L)}\log \abs{V(G)})$ time.

In order to solve the second problem, that is, to compute the number of
non-edges encapsulated by visiting a new entry, see Figure~\ref{fig:sparse:c},
first note that we need this information when computing the borders from the
monotonic order.  Consequently, at this point we have an order at our disposal.
We enumerate the entries in the order and during this enumeration we compute
and update the frontier set of the border corresponding to the entries visited so
far. This update can be done efficiently by keeping the entries, which are edges in $G$ and have a form, say $(x, y)$, in a red-black tree indexed by $x$.
Adding a new entry into a border will delete at most $k$ entries from the frontier
set, where $k$ is the number of parents in the lattice. There can be at most $\abs{V(G)}$ frontier entries,
at any time. Hence, the running time for this enumeration is $O(\abs{E(L)}\log \abs{V(G)})$. 
Assume now that we have computed the frontier set for $i - 1$ entries, and we need to compute the
encapsulated non-edges for $i$th entry, say $(x, y)$. We do this by finding the entry from the frontier
set $(u, v)$ such that $u$ is the largest possible value and $u \leq x$. 
We then use this entry as a starting point and iterate to the following entries w.r.t. the red-black tree.
At each entry we compute the number of non-edges captured between two adjacent nodes
of the frontier set and $(x, y)$, see Figure~\ref{fig:sparse:c}
for illustration. We need to visit only $O(k)$ entries, where $k$ is the number of parents of $i$th entry
in $L$. Hence, computing the areas for all entries can be done in $O(\abs{E(L)}\log \abs{V(G)})$ time.

\section{Discovering Order}\label{sec:order}

Our main focus so far was to compute the segmentation from already ordered
data. This order may be given, for example, if each vertex has a time stamp.
If the order is not known, we will compute the order using the Fiedler
vector~\citep{fiedler75fiedler}, a popular technique for ordering matrices and graphs.  Fiedler
order has a tendency of pushing the edges towards the diagonal. In fact, we can
show that if there \emph{exists} an order of vertices of the graph $G$ such that
the edges respect the order, that is, all edges are around diagonal,
then Fiedler vector will find this order~\citep{fiedler75fiedler}.

We use Fiedler order as an initial order and introduce a simple heuristic
refinement. Assume that we have computed a segmentation using the order.  Let
$B$ one of the segments and let $(x, y)$ be a frontier, as defined in
Section~\ref{sec:sparse}, of $B$. Find the smallest vertex $x'$ such that
for any $u$, $x' \leq u \leq x$, it holds that for any $v$, $(x, v)$ and $(u, v)$ belong to the
same segment. Similarly, find the largest vertex $y'$ w.r.t. $y$, see Figure~\ref{fig:refine:a}.
It immediately follows that a permutation of vertices between $x'$ and $x$, and $y$ and $y'$
cannot decrease the score since none of the entries will leave its segment.
Assume that there is a non-edge entry $(u, v)$ such that $x' \leq u \leq x$ and $y \leq v \leq y'$.
Then if we swap $x$ and $u$, and $v$ and $y$, we essentially decrease the area
of the segment since none of the entries will leave the segment but at the same
time $(u, v)$ cannot be a frontier since it is non-edge, see Figure~\ref{fig:refine:b}.
If there are several non-edges, we select the non-edge, say $(u, v)$, minimizing
$\abs{(u, z) \in E(G) \mid y \leq z \leq y'} + \abs{(z, v) \in E(G) \mid x' \leq z \leq x}$,
that is, we select $u$ and $v$ with the smallest number of edges.

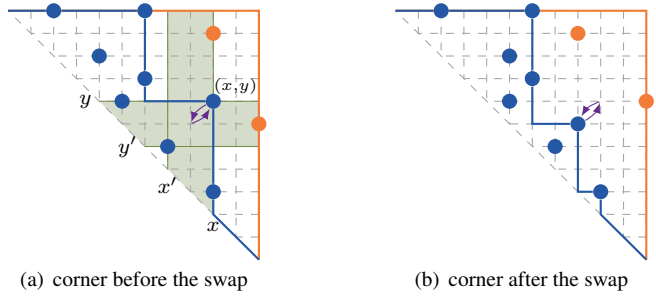
\begin{figure}[htb!]

\hfill
\subfigure[corner before the swap\label{fig:refine:a}]{
\begin{tikzpicture}[scale = 0.3, xscale = -1]

\fill[yafcolor3!30] (2, 2) -- (2, 11) -- (4, 11) -- (4, 4);
\fill[yafcolor3!30] (5, 5) -- (0, 5) -- (0, 7) -- (7, 7);

\drawgrid{0}{11}{dashed, yafaxiscolor!50}

\draw[yafcolor3] (2, 2) -- (2, 11) -- (4, 11) -- (4, 4);
\draw[yafcolor3] (5, 5) -- (0, 5) -- (0, 7) -- (7, 7);

\draw[thick, yafcolor2] (0, 0) -- (0, 11) -- (11, 11);

\drawpoints{fill, circle, yafcolor5, inner sep = 2pt}{n}{2/3, 2/7, 4/5, 5/8, 6/7, 5/11, 7/9, 9/11}
\drawpoints{fill, circle, yafcolor2, inner sep = 2pt}{m}{0/6, 2/10}

\draw[thick, yafcolor5] (0, 0) -- (2, 2) -- (n1) -| (n2) -| (n4) -| (n6) -- (n8) -- (11, 11);

\node[anchor = south west, inner xsep = 0pt] at (n2) {$\scriptstyle (x, y)$};

\node[anchor = north] at (2, 2) {$x$};
\node[anchor = north] at (4, 4) {$x'$};
\node[anchor = east] at (5, 5) {$y'$};
\node[anchor = east] at (7, 7) {$y$};

\node[inner sep = 0pt, outer sep = 0pt] (s) at (3, 6) {};
\draw[->, >=latex, yafcolor1, bend left = 20] (n2) edge (s);
\draw[->, >=latex, yafcolor1, bend left = 20] (s) edge (n2);

\end{tikzpicture}}\hfill
\subfigure[corner after the swap\label{fig:refine:b}]{
\begin{tikzpicture}[scale = 0.3, xscale = -1]

\drawgrid{0}{11}{dashed, yafaxiscolor!50}

\draw[thick, yafcolor2] (0, 0) -- (0, 11) -- (11, 11);

\drawpoints{fill, circle, yafcolor5, inner sep = 2pt}{n}{2/3, 3/6, 4/5, 5/8, 6/7, 5/11, 6/9, 9/11}
\drawpoints{fill, circle, yafcolor2, inner sep = 2pt}{m}{0/7, 3/10}

\draw[thick, yafcolor5] (0, 0) -- (2, 2) -- (n1) -| (n2) -| (n4) -| (n6) -- (n8) -- (11, 11);

\node[inner sep = 0pt, outer sep = 0pt] (s) at (2, 7) {};
\draw[->, >=latex, yafcolor1, bend left = 20] (n2) edge (s);
\draw[->, >=latex, yafcolor1, bend left = 20] (s) edge (n2);

\end{tikzpicture}}
\hspace*{\fill}
\caption{Toy examples related to refining order: before and after swapping $(x, y)$}
\label{fig:refinetoy}
\end{figure}

We will do these swaps until no swaps are possible. For the sake of efficiency,
we do these swaps in a batch style, that is, for each frontier
point $(x, y)$ we first compute $x'$ and $y'$, and then perform swap.
During these swaps we make sure that once an interval $[x', x]$ or $[y, y']$
is used for a swap, it will not be used again in the same batch. Once no swaps
are possible, we recompute the segmentation, and repeat the refinement. This will
eventually converge since the score will always increase after the refinement.

\section{Related Work}\label{sec:related}
To our knowledge our proposed optimization problem is novel.
However, there are similar tasks. A common plotting technique is a contour
plot, where a line follows a constant value. Our optimization technique is
inherently different as we are considering all the points inside a segment and
not just the data points within the vicinity of the boundary.

A related optimization problem was introduced by~\citet{mannila:07:nestedness},
where the authors introduced a concept of nestedness for binary data. A binary data
is nested if for all pairs of rows, one row is either a superset or a subset of
another. If data is fully nested, we can order it in a way that all 1s occur in
the top-left corner. The authors considered minimizing number of 0s that needs
to be replaced such data becomes fully nested. Authors also consider partitioning
columns into $K$ parts such that each part is almost fully segmented.  While
both approaches model the same phenomenon, there are significant differences.
We can handle more general datasets such as counting data and real-valued data,
and our scoring function is based on the log-likelihood. We are interested in
finding $K$-segmentations, whereas \citet{mannila:07:nestedness} focus on
2-segmentation. On the other hand, we assume that our data is preordered
whereas no such assumption is made by~\citet{mannila:07:nestedness}.

The discovered bands give rise for each vertex a set of expanding communities.
Discovering similar structures have been suggested. For example,
\citet{Alvarez-hamelin06largescale} order nodes into a tree by deleting
iteratively low-degree vertices. In another example, \citet{tatti:13:nested}
suggested discovering nested communities given a set of seed nodes.

The spectral approach to discover an order falls into a larger category of approaches
used for solving the seriation problem.
In seriation, a typical goal is, given a similarity matrix between objects, to
organize the objects that minimize some stress function~\citep{hahler08order}. In the ideal case, the
discovered order will rearrange the similarity matrix into a \emph{Robinson}
form, that is, the values of the matrix will get smaller as we move away from
the diagonal. In practice, such a permutation rarely exists and hence the
stress function reflects how far away we are from the Robinson form.
\citet{fiedler75fiedler} demonstrated that the spectral method, that is, ordering
using the Fiedler vector, will find the Robinson form \emph{if} such exists.
If such an order does not exist, then the spectral order can be viewed as an
algorithm for placing objects on a straight line with the goal of minimizing
pair-wise distance weighted by the similarity matrix. Note that the goal here
is to find the locations on a line, the order of the objects comes as a
by-product. Optimizing stress functions that are directly based on the order of
the similarity matrix is typically a \textbf{NP}-hard problem. In such cases
the problem typically resembles a travelling salesman problem. Consequently,
heuristics to discover a good order are variants of TSP solvers~\citep{conf/vldb/JohnsonKCKV04}.

Ordering data rows and columns and using this order to discover an underlying
structure has been suggested.
\citet{gionis:04:geometric}~and~\citet{tatti:12:discovering} suggested
discovering tile hierarchies from ordered binary datasets. Both work used
Bernoulli models, a log-linear model, to score the hierarchies.

In our experiments, the number of borders is small. Hence, we are able to solve
Problem~\ref{prob:seqseg} exactly using a dynamic program. This program
requires quadratic time. If the number of borders becomes too large, that is,
close to the number of entries, using an exact solver becomes impractical.  In
order to overcome this problem we can use heuristic approaches, such as
top-down approaches where we select greedily a new change-point
(see~\citet{shatkay:96:approximate,bernaola-galvan:96:compositional,douglas:73:algorithms,lavrenko:00:mining},
for example) or bottom-up approaches where at the beginning each point is a
segment, and points are combined in a greedy fashion
(see~\citet{palpanas:04:online}, for example). Yet another option is a
randomized heuristic was suggested by~\citet{himberg:01:time}, where we start
from a random segmentation and optimize the segment boundaries.  These
approaches, although fast, are heuristics and have no theoretical guarantees of
the approximation quality.  A divide-and-segment approach, an approximation
algorithm with theoretical guarantees on the approximation quality was given
by~\citet{terzi:06:efficient}.

\section{Experiments}\label{sec:experiments}

\paragraph{Setup:}
In our experiments we used 6 real-world datasets and one synthetic data. The first two graphs,
\emph{DblpCP} and \emph{DblpCF}, are ego-networks of Christos Papadimitriou and
Christos Faloutsos, that is, the graphs contain the collaborators obtained from
DBLP,\!\footnote{\url{http://www.informatik.uni-trier.de/~ley/db/}} two
researchers are connected if they have a joint paper. The other two datasets,
\emph{Fb107} and \emph{Fb1912}, were the two largest ego-networks taken from
the Facebook dataset obtained from SNAP
repository.\!\footnote{\url{http://snap.stanford.edu/snap/}}
The \emph{Mammals} presence data consists of presence records of European
mammals within geographical areas of $50\times50$
kilometers~\citep{mitchell-jones:99:atlas}.\!\footnote{The full version of the
\emph{Mammals} dataset is available for research purposes from the Societas
Europaea Mammalogica at \url{http://www.european-mammals.org}} We computed a
similarity matrix between different locations, say $i$ and $j$, by considering
the observed joint number of different mammals, normalized by $c_ic_j$, where
$c_i$ and $c_j$ are the number of mammals observed at $i$ and $j$,
respectively. Dataset \emph{Paleo}\footnote{NOW
public release 030717 available from~\cite{fortelius05now}.}  contains
information of species fossils found in specific paleontological sites in
Europe~\cite{fortelius05now}. We constructed a similarity matrix between two
fossils, say $a$ and $b$, by computing the number of sites in which both $a$
and $b$ have been discovered.
We also created a synthetic dataset with $250\,000$
vertices and $279\,223$ edges where edges had a higher probability of being
closer to the diagonal. The basic characteristics of the datasets are given in
Table~\ref{tab:basic}.

We ordered the vertices of each graph (except synthetic) according to Fiedler's vector using
algorithm given by~\citet{atkins99seriation}. Since Atkins' algorithm may
produce several orders, this happens when the graph has disconnected
components, we pick one order at random.  Once the graph is ordered we apply
our heuristic method given in Section~\ref{sec:heuristic} along with the
refinement step described in Section~\ref{sec:order}.  We also applied the
algorithm for discovering for exact borders described in Section~\ref{sec:exact}.
We used Bernoulli model for unweighted graphs, Poisson model for \emph{Paleo} data
and $L_2$ error for \emph{Mammals}.
We set the number of bands to $3$ in DBLP and \emph{Paleo} datasets and $4$ in Facebook and \emph{Mammals} datasets.
The statistics of the datasets and the experiments are given in Table~\ref{tab:basic}.
Due to the slow convergence of heuristic method in \emph{Mammals} and \emph{Synthetic},
we limited the number of iterations to $2000$ and did not apply refinement.
We also noticed that after each random tie-breaker the score gets a (relatively
small) bump. Hence, we force random tie-breaker after each $50$th iteration in these two datasets.

\begin{table}[htb!]
\caption{Basic characteristics of the datasets, 
the number of edges in lattice, see Section~\ref{sec:sparse}.}
\label{tab:basic}
\begin{tabular*}{\textwidth}{@{\extracolsep{\fill}}l*{3}{r}}
\toprule
Name & $\abs{V}$ & $\abs{E}$ & $E(L)$ \\
\midrule
DblpCF & 176  & 457     & 794    \\
DblpCP & 154  & 348     & 717     \\
Fb107  & 1034 & 26\,749 & 61\,279 \\
Fb1912 & 747  & 30\,025 & 62\,842\\
Paleo  & 139  & 4428    & 8737 \\
Mammals& 2183 & 2\,378\,193& 4\,752\,044\\
Synthetic & 250\,000 & 279\,223 & 2\,124\,683 \\
\bottomrule
\end{tabular*}
\end{table}

\begin{table}[htb!]
\caption{Basic statistics from experiments using exact and approximate borders.
The columns contain running times, the number
of borders, the number of flip tie-breakers, the number of random tie-breakers,
the number of refinement rounds, initial (negative) score based on a fiedler order, and final score after the refinement.}
\label{tab:runs}
\begin{tabular*}{\textwidth}{@{\extracolsep{\fill}}l*{7}{r}@{}r*{3}r}
\toprule
& \multicolumn{7}{l}{Approximate borders} && \multicolumn{3}{l}{Exact borders} \\
\cmidrule{2-8}
\cmidrule{10-12}
Name & time & brd & iter & rnd & ref & initial & final && time & initial & final \\
\midrule
DblpCF & 0.2s & 55  & 235  & 48  & 3 & 945     & 908     && .02s & 945 & 905 \\
DblpCP & 0.4s & 53  & 701  & 135 & 3 & 966     & 927     && .05s & 966 & 918\\
Fb107  & 12m  & 476 & 7217 & 676 & 7 & 61\,734 & 60\,444 && 20s   & 61\,723& 60\,427\\
Fb1912 & 5m   & 375 & 7357 & 813 & 4 & 43\,212 & 42\,909 && 3.2s  & 43\,212 & 42\,930  \\
Paleo  & 4s   & 201 & 423  & 51  & 4 & $-8645$ & $-8906$ && .13s & $-8645$ & $-8906$   \\
Mammals& 33m  & 2975& 2000 & 40  &   & 19\,798 &         && 2m & 19\,798 & 19\,798 \\
Synthetic & 37m& 625& 2000 & 40  &   & 6\,956\,048\\
\bottomrule
\end{tabular*}
\end{table}

\paragraph{Results:}
Let us first focus on computational complexity. We see that the exact algorithm
works very well in practice, it is faster than the heuristic discovery, and it
has the benefit of producing the exact result.  The main reason for this is
that for medium-sized datasets the running time for the exact algorithm is very
competitive and the slowness of the heuristic algorithm is due to high number
of iterations it requires to converge. 
This dynamic changes when the size of
the data increases. For large datasets, the fact that we need to store the
graph in the full form will slow the algorithm down (or simply run out of
memory). This is the case with the \emph{Synthetic} data where we could not run
the exact algorithm due to the memory limitations.

Unlike the exact method, 
the heuristic discovery of bands can use sparsity to its
advantage (see Section~\ref{sec:sparse}).
More specifically, one iteration is linear w.r.t. the number of
edges in the lattice constructed from the edges of the original graph.  While
it is possible in theory that the number of lattice edges is significantly
higher than the number of edges in the original graph, third column in
Table~\ref{tab:basic} demonstrates that in our experiments the number of
lattice edges is about 2--3 times the number of edges in the original graph. Due
to this property the running times stay reasonably small, minutes at worst.

The number of discovered borders, 5th column, is small, which in turns implies
that the final segmentation is fast, and indeed we spend most of the time
computing the borders.

Let us now study convergence of the heuristic in more detail.
Table~\ref{tab:runs} shows that the heuristic converges close to the exact
values w.r.t. the score function.
The number of
iterations needed in a single refinement round, given in $3$th and $4$th
columns of Table~\ref{tab:basic}, grows as the graph gets larger. The number of
random tie-breaker iterations is about $10$ percent. In order to study the
convergence in more details, we plotted $\score{\sgm{B}}$ in
Figure~\ref{fig:convergence}, where $\sgm{B}$ is the current set of approximate
borders, as a function of rounds left to convergence.  Since the initial border
set comes from a random monotonic order, we repeated this experiment 10 times.
Note that $\score{\sgm{B}}$ is an upper bound for score of the final
segmentation. From the results we see that initial orders have a bad score, and
there is a high variance in the initial score and the convergence time.
However, the score stabilizes quickly, close
to its final value, and the final score has little dependence of the starting
point. Majority of the rounds is spent in cosmetic improvement. This hints that
if the convergence time is a factor, we may stop the border discovery early and
still get a good score.  This is demonstrated with the \emph{Synthetic} dataset.
The convergence of the score is given in Figure~\ref{fig:convergence} and again
we see the same behaviour where the initial order has a bad score but quickly
obtains a stable score.  Running the heuristic took $37$ minutes and we were not
able to run the exact algorithm due to the memory limitations.

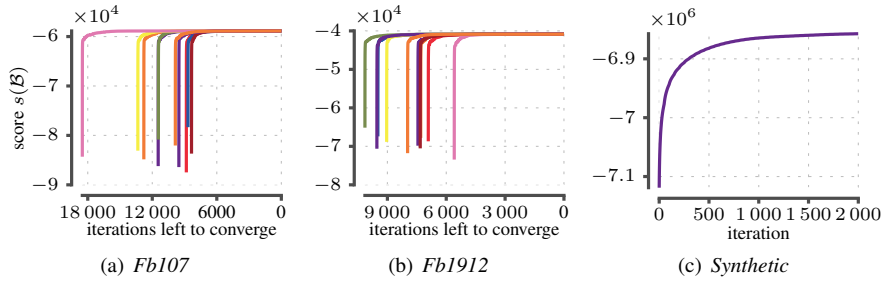
\begin{figure}[htb!]
\hfill
\subfigure[\emph{Fb107}]{
\begin{tikzpicture}
\begin{axis}[xlabel=iterations left to converge, ylabel= {score $\score{\sgm{B}}$},
    width = 4.2cm,
	tick scale binop = \times,
    xtick = {-18000, -12000, -6000, 0},
    xticklabels = {$18\,000$, $12\,000$, 6000, 0},
    x tick label style = {/pgf/number format/set thousands separator = {\,}},
	scaled x ticks = false,
    ytick = {-60000, -70000, -80000, -90000},
	ymin = -90000,
    cycle list name=yaf,
	no markers,
    ]
\addplot table[x expr = {-\thisrowno{0}}, y index = 1, header = false] {data/fb107_1.log};
\addplot table[x expr = {-\thisrowno{0}}, y index = 1, header = false] {data/fb107_2.log};
\addplot table[x expr = {-\thisrowno{0}}, y index = 1, header = false] {data/fb107_3.log};
\addplot table[x expr = {-\thisrowno{0}}, y index = 1, header = false] {data/fb107_4.log};
\addplot table[x expr = {-\thisrowno{0}}, y index = 1, header = false] {data/fb107_5.log};
\addplot table[x expr = {-\thisrowno{0}}, y index = 1, header = false] {data/fb107_6.log};
\addplot table[x expr = {-\thisrowno{0}}, y index = 1, header = false] {data/fb107_7.log};
\addplot table[x expr = {-\thisrowno{0}}, y index = 1, header = false] {data/fb107_8.log};
\addplot table[x expr = {-\thisrowno{0}}, y index = 1, header = false] {data/fb107_9.log};
\addplot table[x expr = {-\thisrowno{0}}, y index = 1, header = false] {data/fb107_10.log};
\pgfplotsextra{\yafdrawaxis{-18505}{0}{-90000}{-58865}}
\end{axis}
\end{tikzpicture}}
\hfill
\subfigure[\emph{Fb1912}]{
\begin{tikzpicture}
\begin{axis}[xlabel=iterations left to converge, ylabel= {}, 
    width = 4.2cm,
	tick scale binop = \times,
    xtick = {-9000, -6000, -3000, 0},
    xticklabels = {$9\,000$, $6\,000$, $3\,000$, 0},
    x tick label style = {/pgf/number format/set thousands separator = {\,}},
    ytick = {-40000, -50000, -60000, -70000, -80000},
	ymin = -80000,
	ymax = -40000,
    cycle list name=yaf,
	no markers,
	scaled x ticks = false,
    ]
\addplot table[x expr = {-\thisrowno{0}}, y index = 1, header = false] {data/fb1912_1.log};
\addplot table[x expr = {-\thisrowno{0}}, y index = 1, header = false] {data/fb1912_2.log};
\addplot table[x expr = {-\thisrowno{0}}, y index = 1, header = false] {data/fb1912_3.log};
\addplot table[x expr = {-\thisrowno{0}}, y index = 1, header = false] {data/fb1912_4.log};
\addplot table[x expr = {-\thisrowno{0}}, y index = 1, header = false] {data/fb1912_5.log};
\addplot table[x expr = {-\thisrowno{0}}, y index = 1, header = false] {data/fb1912_6.log};
\addplot table[x expr = {-\thisrowno{0}}, y index = 1, header = false] {data/fb1912_7.log};
\addplot table[x expr = {-\thisrowno{0}}, y index = 1, header = false] {data/fb1912_8.log};
\addplot table[x expr = {-\thisrowno{0}}, y index = 1, header = false] {data/fb1912_9.log};
\addplot table[x expr = {-\thisrowno{0}}, y index = 1, header = false] {data/fb1912_10.log};
\pgfplotsextra{\yafdrawaxis{-10145}{0}{-80000}{-40000}}
\end{axis}
\end{tikzpicture}}\hfill
\subfigure[\emph{Synthetic}]{
\begin{tikzpicture}
\begin{axis}[xlabel=iteration, ylabel= {}, 
    width = 4.2cm,
	tick scale binop = \times,
    xtick = {-2000, -1500, -1000, -500, 0},
    xticklabels = {$0$, $500$, $1\,000$, $1\,500$, $2\,000$},
    x tick label style = {/pgf/number format/set thousands separator = {\,}},
    cycle list name=yaf,
	no markers,
    ]
\addplot table[x expr = {-\thisrowno{0}}, y index = 1, header = false] {data/large.log};
\pgfplotsextra{\yafdrawaxis{-2000}{0}{-7118949}{-6857371}}
\end{axis}
\end{tikzpicture}}
\hspace*{\fill}
\caption{Convergence during heuristic discovery. Score $\score{\sgm{B}}$, where $\sgm{B}$ is the current set of borders
as a function of iterations left to convergence. Note that $\score{\sgm{B}}$ is an upper bound for the score of final segmentation.}
\label{fig:convergence}
\end{figure}

Let us next look at the refinement. In our experiments, the number of
refinement iterations is low, around 10.  Moreover, the improvement in score is
modest, around $1$\%. This is also the case when we apply to the datasets that are not
ordered with the fiedler vector.  This suggests that the refinement procedure
should not be used alone but instead it should be used only to improve the
already good order. We also tried discovering bands using random orders.
Naturally, the score for discovered bands is weaker. The gain of using spectral
orders depends how strongly banded is the data. For example, \emph{Paleo}
contains a strong banded structure, and the score using a random order was
significantly worse than the spectral order.
Finally, using the exact border discovery with the refinement may produce a worse
score than using the approximate border discovery, as shown by the scores for \emph{Fb1912}.

Finally, let us look at the discovered bands given in
Figures~\ref{fig:wbands}--\ref{fig:snipet}.  The first band in the DBLP datasets
is relatively small, typically discovering small clusters that are due to joint
publication with the author of the ego-network. Facebook graphs contain a
strong clustering structure which is seen in the discovered bands. However, we
also see the overlap effect: in \emph{Fb1912} the clusters overlap.  In
Figure~\ref{fig:snipet} we provide small snipets from DBLP graphs along with
the authors. For example, in \emph{DblpCP}, the inner band contains authors that have collaborate
with each other but these authors do not form a clique. In \emph{Paleo}, the bands correspond
to the weights of the edges: the inner band has edges with higher weights while the outer bands
have less edges.  In \emph{Mammals}, the spectral order---which was computed
without the knowledge of the geological location---corresponds roughly to the latitude: one extreme being
Scandinavia while on the other extreme being Greece and Crete. The order fails to capture Spain and Portugal
suggesting that the data cannot be explained fully by a single order. The discovered bands, especially in the northern
areas, suggest a continuous change in fauna as we move along the latitude.

\begin{figure}[htb!]
\hfill
\subfigure[\emph{Paleo}]{
\begin{tikzpicture}
\begin{axis}[width = 4cm,height = 4cm, ticks = none, grid=none,
	every axis plot post/.style={},
	colormap name = cool,
	colorbar,
	colorbar style = {axis y line*=right, y tick style= {xshift=-\yafaxispad}, y tick label style = {xshift = -\yaftlpad}}
	]

\drawgraphvalue{data/paleo-dist.out}{data/paleo-dist.ord}

\end{axis}
\end{tikzpicture}}\hspace*{0.5cm}\hfill
\subfigure[\emph{Mammals}\label{fig:mammals}]{
\begin{tikzpicture}
\begin{axis}[width = 4cm,height = 4cm, ticks = none, 
	scatter/classes ={0={yafcolor5},1={yafcolor2},2={yafcolor3},3={yafcolor4}},
	every axis plot post/.style={}, 
	axis on top,
	xtick = {0, -500, -1200, -1750, -2000},
	ytick = {0, 500, 1200, 1750, 2000}
	]
\drawvalues{data/mammals-dist-sparse.ord}{cooler}
\drawvarea{data/mammals-dist.out}{2}{yafcolor7}
\drawvarea{data/mammals-dist.out}{1}{yafcolor8}
\drawvarea{data/mammals-dist.out}{0}{yafcolor2}
\draw[yafcolor1, fill] (axis cs: 50, 1) rectangle (axis cs: 100, 470);
\draw[yafcolor2, fill] (axis cs: 50, 501) rectangle (axis cs: 100, 1170);
\draw[yafcolor3, fill] (axis cs: 50, 1201) rectangle (axis cs: 100, 1720);
\draw[yafcolor4, fill] (axis cs: 50, 1751) rectangle (axis cs: 100, 1970);
\draw[yafcolor5, fill] (axis cs: 50, 2001) rectangle (axis cs: 100, 2182);

\draw[yafcolor1, fill] (axis cs: -1, -50) rectangle (axis cs: -470, -100);
\draw[yafcolor2, fill] (axis cs: -501, -50) rectangle (axis cs: -1170, -100);
\draw[yafcolor3, fill] (axis cs: -1201, -50) rectangle (axis cs: -1720, -100);
\draw[yafcolor4, fill] (axis cs: -1751, -50) rectangle (axis cs: -1970, -100);
\draw[yafcolor5, fill] (axis cs: -2001, -50) rectangle (axis cs: -2182, -100);
\end{axis}
\end{tikzpicture}
\includegraphics[height=2.5cm]{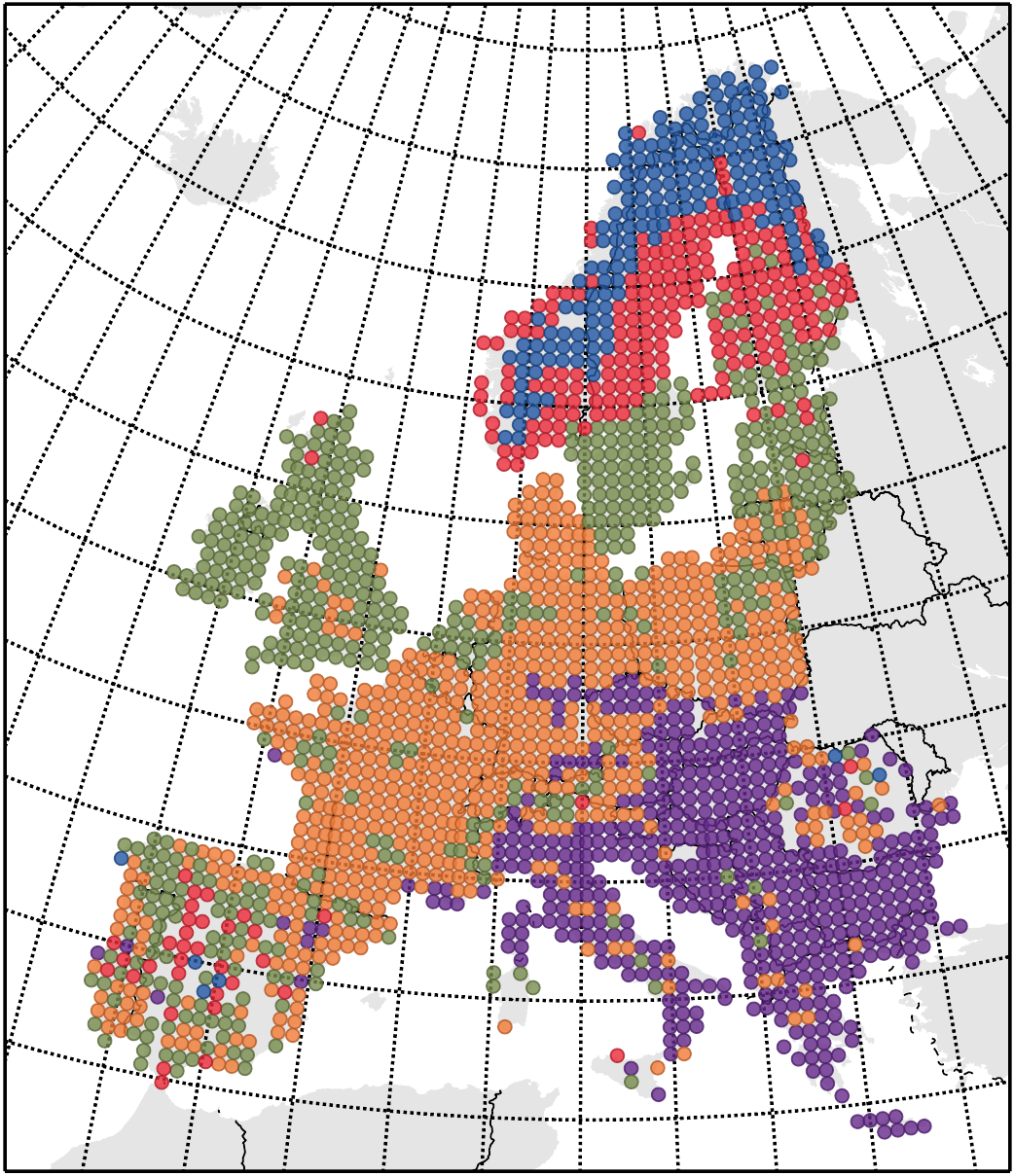}}\hspace*{\fill}
\caption{Adjacency matrices and the obtained segmentations. The upper triangle
depicts the weights of edges while the lower triangle shows the
segmentation. \emph{Mammals} was sparsified for visualization purposes.
In~\ref{fig:mammals}, the colors of the bars along the axis correspond to the color of the locations in the map.}
\label{fig:wbands}
\end{figure}

\begin{figure}[htb!]
\subfigure[\emph{DblpCF}]{
\begin{tikzpicture}
\begin{axis}[width = 4cm,height = 4cm, ticks = none, grid=none,
	scatter/classes ={0={yafcolor5},1={yafcolor2},2={yafcolor3},3={yafcolor4}},
	every axis plot post/.style={}]

\drawgraph{data/ego-christos-faloutsos-ex.out}{data/ego-christos-faloutsos-ex.ord}

\end{axis}
\end{tikzpicture}}\hfill
\subfigure[\emph{DblpCP}]{
\begin{tikzpicture}
\begin{axis}[width = 4cm,height = 4cm, ticks = none, grid=none,
	scatter/classes ={0={yafcolor5},1={yafcolor2},2={yafcolor3},3={yafcolor4}},
	every axis plot post/.style={}]

\drawgraph{data/ego-christos-papadimitriou-ex.out}{data/ego-christos-papadimitriou-ex.ord}

\end{axis}
\end{tikzpicture}}\hfill
\subfigure[\emph{Fb107}]{
\begin{tikzpicture}
\begin{axis}[width = 4cm,height = 4cm, ticks = none, grid=none,
	scatter/classes ={0={yafcolor5},1={yafcolor2},2={yafcolor3},3={yafcolor4}},
	every axis plot post/.style={}]

\drawgraphiv{data/facebook-107-ex.out}{data/facebook-107-sparse.ord}

\end{axis}
\end{tikzpicture}}\hfill
\subfigure[\emph{Fb1912}]{
\begin{tikzpicture}
\begin{axis}[width = 4cm,height = 4cm, ticks = none, grid=none,
	scatter/classes ={0={yafcolor5},1={yafcolor2},2={yafcolor3},3={yafcolor4}},
	every axis plot post/.style={}]

\drawgraphiv{data/facebook-1912-ex.out}{data/facebook-1912-sparse.ord}

\end{axis}
\end{tikzpicture}}
\caption{Adjacency matrices and the obtained segmentations. The upper triangle
depicts the distribution of edges while the lower triangle shows only the
segmentation. The facebook graphs were sparsified for visualization purposes}
\label{fig:bands}
\end{figure}
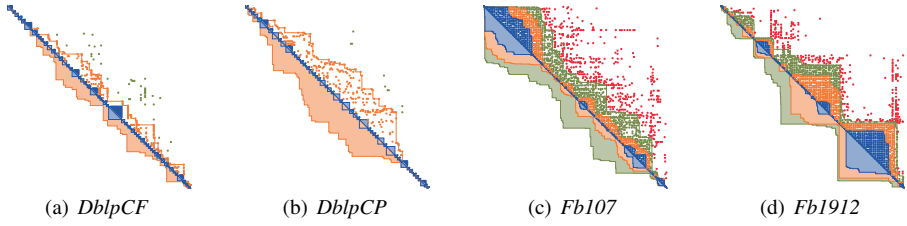

\begin{figure}[htb!]
\hfill
\subfigure[close-up of \emph{DblpCP}]{
\begin{tikzpicture}
\begin{axis}[ width = 4cm,height = 4cm,
	ticks = none,
	scatter/classes ={0={yafcolor5, mark = square*},1={yafcolor2},2={yafcolor3, mark = triangle*},3={yafcolor4}},
	every axis plot post/.style={},
	xtick = {-128,-130,...,-139},
	ytick = {128,130,...,139},
	mark size = 2pt]

\drawdata{data/ego-christos-papadimitriou-project.ord}
\addplot[nodes near coords,
	every node near coord/.append style={anchor=base west, font = \tiny, inner ysep = 0pt},
	only marks, point meta = explicit symbolic]
	table[col sep = tab, x expr = {-128}, y expr = {\thisrowno{0}  - 0.3}, meta index = 1, header = false] {data/map-christos-papadimitriou-project.txt};
\addplot[nodes near coords,
	every node near coord/.append style={anchor=base west, font = \tiny, inner sep = 0pt, rotate = -45},
	only marks, point meta = explicit symbolic]
	table[col sep = tab, y expr = {127}, x expr = {-(\thisrowno{0} + 0.5)}, meta index = 1, header = false] {data/map-christos-papadimitriou-project-abbrv.txt};

\end{axis}
\end{tikzpicture}}\hfill
\subfigure[close-up of \emph{DblpCF}]{
\begin{tikzpicture}
\begin{axis}[ width = 4cm,height = 4cm,
	ticks = none,
	scatter/classes ={0={yafcolor5, mark = square*},1={yafcolor2},2={yafcolor3, mark = triangle*},3={yafcolor4}},
	every axis plot post/.style={},
	xtick = {-29,-31,...,-40},
	ytick = {29,31,...,40},
	mark size = 2pt]

\drawdata{data/ego-christos-faloutsos-ex-project.ord}
\addplot[nodes near coords,
	every node near coord/.append style={anchor=base west, font = \tiny, inner ysep = 0pt},
	only marks, point meta = explicit symbolic]
	table[col sep = tab, x expr = {-29}, y expr = {\thisrowno{0}  - 0.3}, meta index = 1, header = false] {data/map-christos-faloutsos-project.txt};
\addplot[nodes near coords,
	every node near coord/.append style={anchor=base west, font = \tiny, inner sep = 0pt, rotate = -45},
	only marks, point meta = explicit symbolic]
	table[col sep = tab, y expr = {28}, x expr = {-(\thisrowno{0} + 0.5)}, meta index = 1, header = false] {data/map-christos-faloutsos-project-abbrv.txt};

\end{axis}
\end{tikzpicture}}\hspace*{\fill}
\caption{Snipets of \emph{DblpCP} and \emph{DblpCF}}
\label{fig:snipet}
\end{figure}
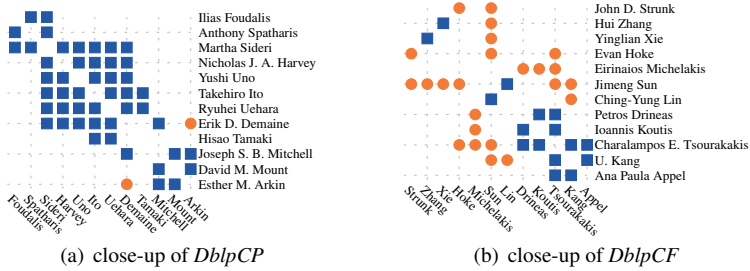

\comment{
In this section we present our experiments using heuristic discovery
as described in Section~\ref{sec:heuristic}.

\emph{Datasets}. We used one synthetic dataset and 4 real-world datasets, see
Table~\ref{tab:basic} for details.  Our first dataset is \emph{Circles}, a
binary dataset of size $200 \times 200$.  We set each entry to a value of 1 with
a varying probability. If the entry was within a radius of 50 from the top-left
corner, we use a probability of $0.8$. Similarly, we used a probability of $0.6$ and $0.4$
for the radii 100 and 150, respectively. If the entry was outside a radius of 150,
we used $0.2$ as a probability.
Our second dataset, \emph{Alon}, contains gene expressions~\citep{alon:99:data}. 
\emph{ICDM} contains the abstracts of papers accepted at ICDM up to 2007, for
which we took the words with a frequency of at least $0.02$ after stemming and
removing stop words~\citep{debie:11:dami}.  The \emph{Mammals} presence data
consists of presence records of European mammals\footnote{Available for
research purposes: \url{http://www.european-mammals.org}} within geographical
areas of $50 \times 50$ kilometers~\citep{ mitchell-jones:99:atlas}.
Our final dataset, \emph{Votes}, consists of voting results of Lithuanian parliamentary
election in 2012.\!\footnote{\url{http://data.ukmin.lt/duomenys_rinkimu.html}} We ordered the real-world datasets by applying SVD decomposition
and sorting rows and columns using the left and right eigenvectors associated to the largest eigenvalue.

\begin{table}[htb!]
\begin{center}
\caption{Basic characteristics of the the datasets, 
log-linear models used for scoring, and number of segments.}
\label{tab:basic}
\begin{tabular}{lrrrr}
\toprule
Name & dimensions & type & model & $K$  \\
\midrule
Circles & $200 \times 200$ & binary & Bernoulli & 4 \\[1mm]
Alon & $2\,000 \times 62$ & real & Gaussian & 10\\
ICDM  & $859 \times 541$ & binary & Bernoulli & 10\\
Mammals & $2\,183 \times 121$  & binary & Bernoulli & 10\\
Votes & $2\,017 \times 1\,872$ & integer & Poisson & 10\\
\bottomrule
\end{tabular}
\end{center}
\end{table}

\emph{Results}. Our first step is to study the convergence of heuristic
discovery. The statistics from our experiments are given in
Table~\ref{tab:stats}. From these results we see that our algorithm was fast,
except for \emph{Votes}: the running times were minutes for other datasets and
7.5 hours for \emph{Votes}. This difference can be explained by the size of the
dataset. Number of discovered borders is small for all datasets, significantly
smaller than the number of entries in the dataset. This implies that we spent
most of the time in discovering borders, and that we can use exact sequence
segmentation algorithm, when segmenting the borders. The number of rounds
needed in order to converge is increasing with the size of the dataset:
we needed 869 rounds to converge for \emph{Circles} but over 30\,000 rounds
for \emph{Votes}. Out of these rounds $10$--$20$ per cent are done with a random tie-breaker.

\begin{table}[htb!]
\begin{center}
\caption{Statistics of a typical segmentation discovery using \findorder described
in Section~\ref{sec:heuristic}. The first column contains running times, the second
column contains number of discovered borders, the third column contains total number of iterations, and the fourth column contains number of iterations
with a random tie-breaker.}
\label{tab:stats}
\begin{tabular}{lrrrr}
\toprule
Name & time & borders & rounds & random\\
\midrule
Circles & 4s & 99 & 869 & 140 \\[1mm]
Alon & 13s & 281 & 716 & 130\\
ICDM  & 7m & 255 & 4348 & 756\\
Mammals & 45s & 396 & 1\,045 & 165\\
Votes & 423m & 2\,129 & 30\,676 & 4359\\
\bottomrule
\end{tabular}
\end{center}
\end{table}

Let us study convergence in more details. To that end, we plotted 
$\score{\sgm{B}}$ in Figure~\ref{fig:convergence}, where $\sgm{B}$ is
the current set of approximate borders, as a function of rounds left to convergence.  Since
the initial border set comes from a random monotonic order, we repeated this
experiment 10 times. Note that $\score{\sgm{B}}$ is an upper bound for score of
the final segmentation. From the results we see that initial orders have a bad
score, and there is a high variance in the initial score and the convergence
time. However, the score stabilizes quickly, typically in less than 10 rounds,
close to its final value, and the final score has little dependence of the
starting point. Majority of the rounds is spent in cosmetic improvement. This
hints that if the convergence time is a factor, we may stop the border
discovery early and still get a good segmentation.

Finally let us look on discovered segmentations given in
Figure~\ref{fig:segments}.  We see that in \emph{Circles} we are able to
recover the original circles. In \emph{Votes} and \emph{Alon}, few segments are
tightly packed to to top-left corner. This suggests that there are few entries in
the top-left corner with a significantly high value. Segmentation is less
skewed for \emph{ICDM} and \emph{Mammals}.}

\section{Conclusion}\label{sec:conclusion}
In this paper we introduced an optimization problem in order to model the
concentration of edges next to the diagonal. More specifically, given a graph
and an integer $K$, our goal is to order the vertices and find $K$ bands
around the diagonal, inner bands being more dense. 
As a measure of goodness we use the likelihood of a log-linear model, a
family containing many standard distributions.

We divide the problem into two subproblems. The first problem is to find a good
order while the second problem is to discover the bands.  We introduced two
solvers for the latter problem. The first solver is exact and based on isotonic
regression, the second solver is a heuristic approach that sorts the entries
into a linear order and applies one-dimensional isotonic regression. By doing
so, the solver can exploit sparsity of the graph.  Both approaches complement
each other. If we can store the graph as a full adjacency matrix, then it is
beneficial to use the exact algorithm. The algorithm runs in $O(N^4)$ time, at
worst, but is closer to $O(N^2\log P)$, where $N$ is the number of vertices and
$P$ is the number of borders.  On the other hand, if the number of vertices is
large but the graph is sparse it is better to apply the heuristic approach with
a limited number of iterations. A single iteration runs in $O(\abs{E(L)})$
time, where $L$ is the lattice describing the relations between the nodes
(described in Section~\ref{sec:sparse}). At worst, $\abs{E(L)}$
is $O(N^2)$ but in practice it is smaller for sparse graphs.

We discover the order mainly by using Fiedler order.  It would be fruitful to
develop algorithms that induce an order while looking for the optimal
segmentation simultaneously. The work done by~\citet{mannila:07:nestedness}
with nested datasets hints that this problem is \textbf{NP}-hard. However, it
may be possible to develop an efficient heuristic approach.

Another open question, which we will leave to a future work, is the problem of
choosing $K$, the number of segments.  This question is not only specific to
our problem but also occurs in many classic problems such as clustering or
sequence segmentation. In our experiments computing the final segmentation from
the set of borders is cheap.  Hence, instead of studying a single segmentation
we can compute several segmentations simultaneously and present all of them to the user.
Alternatively, since our score is essentially a log-likelihood, we can design an
MDL approach to select a segmentation with a good score.

\bibliographystyle{abbrvnat}
\bibliography{abbrev,bibliography}

\ifapx
\appendix
\section{Proof of Proposition~\lowercase{\ref{prop:borders}}}\label{sec:proofborder}

In order to prove this proposition, we first need to establish a series of
lemmas.  We will use the following fact. Let $B$ be a set of entries, and
let $C \subsetneq B$ be a non-empty subset of $B$ such that $\freq{C} <
\freq{B}$, then $\freq{B \sm C} > \freq{B}$.

The first lemma states that if you cut a segment from a maximal corner, then that
segment will be more dense. Conversely, any segment adjacent to a maximal corner
will be more sparse.

\begin{lemma}
\label{lem:split}
Let $U$ be a corner. Let $V = \maxc{U}$ be a corner.
Let $X \supseteq U$ be a corner. If $V \sm X \neq \emptyset$,
then $\freq{V \sm X} \geq \freq{V \sm U}$.
If $X \sm V \neq \emptyset$,
then $\freq{X \sm V} < \freq{V \sm U}$.
\end{lemma}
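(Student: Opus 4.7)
The plan is to prove the two inequalities separately, in each case reducing to the extremal property of $V = \maxc{U}$ by building an auxiliary corner from $V$ and $X$ using the fact that corners are closed under intersection and union.

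For the first inequality, I would set $W = V \cap X$. Since $V$ and $X$ are both corners containing $U$, the set $W$ is again a corner with $U \subseteq W \subseteq V$, and the assumption $V \sm X \neq \emptyset$ gives $W \subsetneq V$ and the identity $V \sm X = V \sm W$. If $W = U$ the claim is immediate with equality. Otherwise $U \subsetneq W \subsetneq V$, so the extremality of $V$ forces $\freq{W \sm U} \leq \freq{V \sm U}$. Now use the disjoint decomposition $V \sm U = (V \sm W) \cup (W \sm U)$ and the standard weighted-average fact (recalled right before Lemma~\ref{lem:split}): if a proper subset has average strictly below that of the whole, the complement's average exceeds it. Applied to $W \sm U \subseteq V \sm U$, this gives $\freq{V \sm W} \geq \freq{V \sm U}$, which is the desired bound.

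For the second inequality, I would instead set $Y = V \cup X$, a corner since corners are closed under union, and observe that the assumption $X \sm V \neq \emptyset$ gives $Y \supsetneq V \supsetneq U$. Here the tie-breaking convention in the definition of $\maxc{U}$ becomes essential: because $V$ is the \emph{largest} corner attaining the maximum of $\freq{\cdot \sm U}$, any strictly larger corner must have strictly smaller average, so $\freq{Y \sm U} < \freq{V \sm U}$. Decomposing disjointly $Y \sm U = (V \sm U) \cup (X \sm V)$ and expanding $|Y \sm U|\freq{Y \sm U}$ as a weighted sum then yields $\freq{X \sm V} < \freq{V \sm U}$ by elementary algebra.

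The only place where anything subtle happens is the strict inequality in the second part, which hinges entirely on the tie-breaking rule built into $\maxc{\cdot}$; without it one would only get $\leq$. The rest is bookkeeping: checking that $V \cap X$ and $V \cup X$ really are corners (immediate from the definition of a corner as downward-closed under the coordinatewise order), that the decompositions are disjoint, and one line of weighted-average arithmetic.
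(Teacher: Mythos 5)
Your proof is correct and follows essentially the same route as the paper's: both parts hinge on forming $V \cap X$ (resp.\ $V \cup X$), invoking the maximality and tie-breaking in the definition of $\maxc{U}$, and applying the weighted-average decomposition of $V \sm U$ (resp.\ $(V\cup X)\sm U$). The only difference is presentational --- you argue directly where the paper argues by contradiction --- and your explicit handling of the $V\cap X = U$ case and of the strictness via the ``largest candidate'' convention matches what the paper leaves implicit.
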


\begin{proof}
Assume that $\freq{V \sm X} < \freq{V \sm U}$.
Then, since
\[
	(V \sm U) \sm (V \sm X) = (V \cap X) \sm U,
\]
$\freq{(V \cap X) \sm U} > \freq{V \sm U}$,
which contradicts the definition of $V$.
The second case holds because otherwise $\freq{(V \cup X) \sm U} \geq \freq{V \sm U}$,
which contradicts the definition of $V$.
\qed\end{proof}

The next lemma essentially states that if a corner cuts a maximal corner, then
it cannot be a border.

\begin{lemma}
\label{lem:augment}
Let $U$ be a corner. Write $V = \maxc{U}$.
Let $X \supsetneq U$ be a corner such that $V \sm X \neq \emptyset$, then
$\freq{(V \cup X) \sm X} \geq \freq{X \sm U}$.
Consequently, $X$ is not a border.
\end{lemma}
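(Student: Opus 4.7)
The plan is to decompose the relevant sets into three disjoint pieces and then apply Lemma~\ref{lem:split} together with the defining extremality of $V = \maxc{U}$.

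First I would note the trivial identity $(V \cup X)\sm X = V \sm X$, so the desired inequality simplifies to $\freq{V \sm X} \geq \freq{X \sm U}$. To exploit this, I would introduce the disjoint blocks
\[
A = (V \cap X) \sm U,\qquad B = V \sm X,\qquad C = X \sm V,
\]
which give the disjoint-union decompositions $V \sm U = A \sqcup B$, $X \sm U = A \sqcup C$, and $(V \cup X) \sm U = A \sqcup B \sqcup C$. The hypothesis $V \sm X \neq \emptyset$ means $B \neq \emptyset$, and $X \supsetneq U$ ensures $A \sqcup C \neq \emptyset$, so all averages we are about to write are well-defined.

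Now I would pull two inequalities out of the setup. Since $X$ is a corner with $X \supseteq U$ and $V \sm X \neq \emptyset$, the first half of Lemma~\ref{lem:split} yields $\freq{B} \geq \freq{A \sqcup B}$, which forces $\freq{A} \leq \freq{B}$ (treating the case $A = \emptyset$ trivially). On the other hand, $V \cup X$ is a corner strictly containing $U$, so the maximality $V = \maxc{U}$ gives $\freq{A \sqcup B} \geq \freq{A \sqcup B \sqcup C}$, which forces $\freq{C} \leq \freq{A \sqcup B}$; combined with $\freq{A \sqcup B} \leq \freq{B}$ (from the previous step), this yields $\freq{C} \leq \freq{B}$. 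Since both $\freq{A}$ and $\freq{C}$ are bounded by $\freq{B}$, their weighted average satisfies $\freq{A \sqcup C} \leq \freq{B}$, i.e.\ $\freq{X \sm U} \leq \freq{V \sm X} = \freq{(V \cup X) \sm X}$, which is the first claim.

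For the consequence, I would invoke the very definition of a border: $X$ is a border iff there are no corners $X' \subsetneq X \subsetneq Y$ with $\freq{Y \sm X} \geq \freq{X \sm X'}$. Taking $X' = U$ (allowed because $U \subsetneq X$ by assumption) and $Y = V \cup X$ (strictly larger than $X$ because $V \sm X \neq \emptyset$) produces exactly such a forbidden pair, so $X$ cannot be a border.

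The only mildly tricky step is the chain $\freq{C} \leq \freq{A \sqcup B} \leq \freq{B}$, which requires invoking both the maximality of $V$ (for the first inequality) and Lemma~\ref{lem:split} (for the second); once that is in place, the rest is routine weighted-average bookkeeping and a direct appeal to the border definition.
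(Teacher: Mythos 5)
Your proof is correct, and it uses the same two ingredients as the paper's --- the first half of Lemma~\ref{lem:split} and the extremality of $V = \maxc{U}$ --- but combines them along a more roundabout path. The paper applies the maximality of $V$ \emph{directly to the corner $X$}: since $X \supsetneq U$ is a corner, the definition of $\maxc{U}$ immediately gives $\freq{V \sm U} \geq \freq{X \sm U}$, and Lemma~\ref{lem:split} gives $\freq{V \sm X} \geq \freq{V \sm U}$; chaining these two inequalities and using $(V \cup X) \sm X = V \sm X$ finishes the first claim in two lines, with no decomposition needed. You instead apply maximality to the corner $V \cup X$, which only bounds $\freq{C} = \freq{X \sm V}$, and you then have to recover the bound on $\freq{X \sm U} = \freq{A \sqcup C}$ by separately bounding $\freq{A}$ (via Lemma~\ref{lem:split} again) and doing the weighted-average bookkeeping. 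Everything you write checks out, including the careful handling of the empty cases for $A$ and $C$ and the final appeal to the definition of a border with the witnesses $X' = U$ and $Y = V \cup X$ (the paper's statement of the consequence is the same). The only thing your route costs is length; the only thing it buys is that it makes explicit, block by block, where each density sits relative to $\freq{V \sm X}$, which some readers may find more transparent than the paper's bare chain of inequalities.
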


\begin{proof}
By definition of $V$, $\freq{V \sm U} \geq \freq{X \sm U}$.

Lemma~\ref{lem:split} implies that $\freq{V \sm X} \geq \freq{V \sm
U}$. As $(V \cup X) \sm X = V \sm X$, we can combine the two inequalities and prove the first claim.
$X$ cannot be a border since $U \subsetneq X \subsetneq V \cup X$.
\qed\end{proof}

Lemmas~\ref{lem:split}--\ref{lem:augment} can be modified to hold for minimal
corners. The proofs for these lemmas are similar to the proofs of Lemmas~\ref{lem:split}--\ref{lem:augment}.

\begin{lemma}
\label{lem:split2}
Let $U$ be a corner. Let $V = \minc{U}$ be a corner.
Let $X \subsetneq U$ be a corner. If $X \sm V \neq \emptyset$,
then $\freq{X \sm V} \leq \freq{U \sm V}$.
If $V \sm X \neq \emptyset$,
then $\freq{V \sm X} > \freq{U \sm V}$.
\end{lemma}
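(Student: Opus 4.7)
The plan is to mirror the proof of Lemma~\ref{lem:split}, replacing the use of maximality by the dual extremal property of $V = \minc{U}$: among all corners strictly inside $U$, $V$ minimizes $\freq{U \sm V}$, and among minimizers it is the smallest. For each claim the strategy is to start from a hypothetical $X$ that would violate the conclusion and build, by taking a union or intersection with $V$, a competing corner that contradicts this extremal property.

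For the first claim I will consider the union $V \cup X$, which is a corner satisfying $V \subseteq V \cup X \subseteq U$. If $V \cup X \subsetneq U$, then by the minimality of $\freq{U \sm V}$ we have $\freq{U \sm (V \cup X)} \geq \freq{U \sm V}$. The disjoint decomposition $U \sm V = (U \sm (V \cup X)) \cup (X \sm V)$ expresses $\freq{U \sm V}$ as a weighted average of $\freq{U \sm (V \cup X)}$ and $\freq{X \sm V}$, so the latter must be at most $\freq{U \sm V}$, as claimed. The remaining boundary case $V \cup X = U$ yields $X \sm V = U \sm V$ and the inequality holds trivially with equality.

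For the second claim I will consider the intersection $V \cap X$. Since $V \sm X \neq \emptyset$ it is a strict subcorner of $V$, and hence $V \cap X \subsetneq U$. The disjoint decomposition $U \sm (V \cap X) = (U \sm V) \cup (V \sm X)$ together with $\freq{U \sm (V \cap X)} \geq \freq{U \sm V}$ gives $\freq{V \sm X} \geq \freq{U \sm V}$ by the same weighted-average argument. To upgrade this to a strict inequality I will invoke the tie-breaking rule: if equality held, then $\freq{U \sm (V \cap X)} = \freq{U \sm V}$, so $V \cap X$ would also minimize $\freq{U \sm \cdot}$ while being strictly smaller than $V$, contradicting the choice of $V$ as the smallest minimizer.

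The main obstacle is conceptual rather than computational, and lies in the strict inequality of the second claim: every inequality up to that point is weak, and strictness is forced only by the convention that $\minc{U}$ selects the \emph{smallest} minimizer. Everything else reduces to tracking the two disjoint decompositions above and applying the extremal property of $V$ to the freshly constructed corner.
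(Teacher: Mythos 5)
Your proof is correct and takes essentially the approach the paper intends: the paper omits the proof of Lemma~\ref{lem:split2}, stating only that it is ``similar'' to that of Lemma~\ref{lem:split}, and your argument is precisely that dual argument, swapping unions for intersections and using the disjoint decompositions $U \sm V = (U \sm (V \cup X)) \cup (X \sm V)$ and $U \sm (V \cap X) = (U \sm V) \cup (V \sm X)$ together with the extremal (and smallest-minimizer) property of $\minc{U}$. The handling of the boundary case $V \cup X = U$ and the derivation of strictness from the tie-breaking convention are both sound.
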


\begin{lemma}
\label{lem:augment2}
Let $U$ be a corner. Write $V = \minc{U}$.
Let $X \subsetneq U$ be a corner such that $X \sm V \neq \emptyset$, then
$\freq{X \sm (V \cap X)} \leq \freq{U \sm X}$.
Consequently, $X$ is not a border.
\end{lemma}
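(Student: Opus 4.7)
The plan is to mirror the proof of Lemma~\ref{lem:augment} but working with $\minc$ instead of $\maxc$, using Lemma~\ref{lem:split2} as the direct analog of Lemma~\ref{lem:split}.

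First I would observe the trivial identity $X \sm (V \cap X) = X \sm V$, so the inequality to prove reduces to $\freq{X \sm V} \leq \freq{U \sm X}$. To bound $\freq{X \sm V}$ from above, I would apply Lemma~\ref{lem:split2} directly with the given $U$, its minimal corner $V = \minc{U}$, and the corner $X \subsetneq U$; since the hypothesis $X \sm V \neq \emptyset$ matches the hypothesis of that lemma, it yields $\freq{X \sm V} \leq \freq{U \sm V}$. To bound $\freq{U \sm V}$ from above by $\freq{U \sm X}$, I would invoke the defining optimality of $V = \minc{U}$: among all corners properly contained in $U$, the corner $V$ minimizes $\freq{U \sm \cdot}$, so since $X \subsetneq U$ is such a competitor we get $\freq{U \sm V} \leq \freq{U \sm X}$. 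Chaining the two inequalities gives the claim.

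For the ``consequently'' clause, I would produce an explicit witness showing $X$ fails the border condition. Take $X' = V \cap X$ and $Y = U$. The intersection of two corners is a corner (both are downward-closed), and $X' \subsetneq X$ holds because $X \sm V \neq \emptyset$; the inclusion $X \subsetneq Y = U$ is given. The inequality $\freq{Y \sm X} \geq \freq{X \sm X'}$ is precisely $\freq{U \sm X} \geq \freq{X \sm V}$, which is what was just established. Hence $X$ is sandwiched between $X'$ and $Y$ in a way that violates the border definition.

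I do not expect any real obstacle: the proof is essentially symmetric to Lemma~\ref{lem:augment}, swapping the roles of ``outer'' and ``inner'' corners and replacing $\maxc$ by $\minc$. The only care needed is (i) checking the $\subsetneq$ conditions required to apply the definition of $\minc{U}$ and Lemma~\ref{lem:split2}, which follow from the hypotheses $X \subsetneq U$ and $X \sm V \neq \emptyset$, and (ii) remembering that $V \cap X$ is automatically a corner so it is an admissible witness $X'$ in the non-border argument.
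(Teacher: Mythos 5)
Your proof is correct and is exactly the argument the paper intends: the paper omits the proof, stating only that it is obtained by dualizing Lemma~\ref{lem:augment}, and your chain $\freq{X \sm V} \leq \freq{U \sm V} \leq \freq{U \sm X}$ via Lemma~\ref{lem:split2} and the optimality of $\minc{U}$, followed by the witness pair $V \cap X \subsetneq X \subsetneq U$, is precisely that dualization.
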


Assume that we are given two $K$-segmentations, $\sgm{U} = U_0, \ldots, U_K$
and $\sgm{V} = V_0, \ldots, V_K$. We write $\sgm{U} \precneqq \sgm{V}$ if there
exists $i$ such that $U_j = V_j$ for $j < i$, and $U_i \subsetneq V_i$.  The
idea behind the proof of Proposition~\ref{prop:borders} is that we can replace
a non-border in a segmentation, say $\sgm{U}$, either with its minimal or
maximal corner such that either the score will increase or that the resulting
segmentation, say $\sgm{V}$, is smaller wrt. partial order, $\sgm{V} \precneqq
\sgm{U}$.

Next lemma makes sure that we can always pick a non-border in a
segmentation that such that its maximal and minimal corners form a chain with
the other corners in the segmentation.

\begin{lemma}
\label{lem:squeeze}
Let $U_0, \ldots, U_K$ be a $K$-segmentation.  Assume that, say, $U_i$ is not a
border. Then there exists $U_j$ with $0 < j < K$, and two corners $X$ and $Y$ such that
$U_{j - 1} \subseteq X \subsetneq U_j \subsetneq Y \subseteq U_{j + 1}$ and $\freq{Y
\sm U_j} \geq \freq{U_j \sm X}$.
\end{lemma}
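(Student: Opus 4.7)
The plan is to perform a pair of walks along the chain $U_0, \ldots, U_K$, starting from the bad index $i$, using the two Lemmas~\ref{lem:augment} and~\ref{lem:augment2} to propagate non-borderness. First, note that if $U_j$ is not a border, then both $V_j := \maxc{U_j} \supsetneq U_j$ and $W_j := \minc{U_j} \subsetneq U_j$ are defined, and taking $\sup$ over upper witnesses and $\inf$ over lower witnesses in the non-border definition gives $\freq{V_j \sm U_j} \geq \freq{U_j \sm W_j}$. This is the basic density fact that keeps everything consistent.

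Next I would run an upward walk. Start at $j = i$. If $V_j \subseteq U_{j+1}$, stop; otherwise $V_j \sm U_{j+1} \neq \emptyset$, so Lemma~\ref{lem:augment} applied with $U = U_j$ and $X = U_{j+1}$ yields $\freq{(V_j \cup U_{j+1}) \sm U_{j+1}} \geq \freq{U_{j+1} \sm U_j}$, which both certifies that $U_{j+1}$ is not a border and gives, via maximality of $V_{j+1}$, the invariant $\freq{V_{j+1} \sm U_{j+1}} \geq \freq{U_{j+1} \sm U_j}$. Advance $j$ and repeat. The walk must terminate at some $j_1 < K$, since at $j = K-1$ we have $V_{K-1} \subseteq U_K = A$ trivially. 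If $j_1 > i$, take $j = j_1$, $X = U_{j_1-1}$, $Y = V_{j_1}$; the required inclusions are immediate and the density inequality is exactly the invariant just established.

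If instead $j_1 = i$ (the upward walk never moved, i.e., $V_i \subseteq U_{i+1}$), I would then run the symmetric downward walk using Lemma~\ref{lem:augment2}. Start at $j = i$. If $W_j \supseteq U_{j-1}$, stop; otherwise $U_{j-1} \sm W_j \neq \emptyset$ and Lemma~\ref{lem:augment2} applied with $U = U_j$ and $X = U_{j-1}$ shows $U_{j-1}$ is not a border and, by minimality of $W_{j-1}$, yields the invariant $\freq{U_{j-1} \sm W_{j-1}} \leq \freq{U_j \sm U_{j-1}}$. Decrement $j$ and repeat; this terminates at some $j_2 \geq 1$ because $U_0$ is trivially a border, so the walk cannot consume it. If $j_2 < i$, take $j = j_2$, $X = W_{j_2}$, $Y = U_{j_2+1}$; the inclusions are immediate and the density condition is exactly the downward invariant. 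If $j_2 = i$, then both $W_i \supseteq U_{i-1}$ and $V_i \subseteq U_{i+1}$, so taking $j = i$, $X = W_i$, $Y = V_i$ works, with the density condition supplied by the basic fact from paragraph one.

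The main obstacle is orchestrating the two walks so that the $X$ and $Y$ we finally choose sit in adjacent slots of the \emph{same} index $j$. The trick that makes this work is to keep exactly one of $X$, $Y$ as a segmentation corner ($U_{j-1}$ or $U_{j+1}$) whenever a walk was needed in that direction, and use the walk's invariant, rather than the raw non-border inequality at $U_j$, to supply the density comparison. Only when neither walk moves do we use $\minc{U_i}$ and $\maxc{U_i}$ simultaneously, which is precisely the case where they already fit between $U_{i-1}$ and $U_{i+1}$.
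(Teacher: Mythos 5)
Your proof is correct and follows essentially the same route as the paper's: propagate non-borderness along the chain using Lemma~\ref{lem:augment} (upward, via $\maxc{\cdot}$) and Lemma~\ref{lem:augment2} (downward, via $\minc{\cdot}$) until the maximal/minimal corner fits inside the adjacent segmentation corner, then read off $X$ and $Y$ from the walk's invariant. Your write-up is in fact somewhat more explicit than the paper's about the invariants maintained and about how the two walks are combined in the final case analysis.
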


\begin{proof}
Since $U_i$ is not a border, there exist $X$ and $Y$ such that $X \subsetneq U_i
\subsetneq Y$ and $\freq{Y \sm U_i} \geq \freq{U_i \sm X}$.  We need to
show that $U_{i - 1} \subseteq X$ and $Y \subseteq U_{i + 1}$, or possibly
modify $X$, $Y$, and $i$ such that inclusions hold.

We can safely assume that $Y = \maxc{U_i}$.

Assume that $Y \nsubseteq U_{i + 1}$, that is, $Y \sm U_{i + 1} \neq
\emptyset$.  Lemma~\ref{lem:augment} implies that $\freq{Y \sm U_{i + 1}}
\geq \freq{U_{i + 1} \sm U_i}$. Hence, we can redefine $X = U_i$ and $Y =
\maxc{U_{i + 1}}$, and then increase $i$ by one and repeat the argument.  This
process will eventually stop since we increase $i$ every step.  When we finally
stop, note that both inclusions will hold, and we have proved the lemma.

If we start with the case where $Y \subseteq U_i$ and $U_{i - 1} \nsubseteq X$,
we proceed to modify $X$, $Y$, and $i$ in the opposite direction.
\qed\end{proof}

Our next goal is Corollary~\ref{cor:monotonepar} which will make sure that
we can always make sure that the segmentation is monotonic. For that we need
the following two lemmas, that follow immediately from the fact that our
scoring function is a log-linear model.

\begin{lemma}
\label{lem:concave}
$\score{X \mid u}$ is a concave function of $u$. 
Let $U$ and $V$ be two segments such that $\freq{U} \leq \freq{V}$.
Let $r = \arg \max_u \score{U \mid u}$ and $t = \arg \max_u \score{V \mid u}$.
Then $r \leq t$.
\end{lemma}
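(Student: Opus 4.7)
The plan is to split the statement into two parts: concavity of $\score{X \mid u}$ in $u$, and the monotonicity of the argmax with respect to $\freq{X}$. The first is a structural fact about log-linear families, and the second will follow directly from concavity via a first-order condition.

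For concavity, I would first peel off the linear part of $\score{X \mid u} = \abs{X}\pr{Z(u) + u\freq{X}}$. The term $\abs{X} u \freq{X}$ is affine in $u$, so it suffices to show that $Z(u)$ is concave. This is a standard fact: since $p(x \mid u) = \exp(q(x) + Z(u) + uS(x))$ is a normalized probability density, $Z(u) = -\log \int \exp(q(x) + u S(x))\,dx$, i.e.\ $Z$ is the negative of the log-partition (cumulant generating) function. The log-partition function is convex in the natural parameter (its second derivative equals the variance of $S(x)$ under $p(\cdot \mid u)$), so $Z$ itself is concave, and therefore $\score{X \mid u}$ is concave in $u$.

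For the second part, I would use the first-order optimality condition. Differentiating gives
\[
\frac{d}{du}\score{X \mid u} = \abs{X}\pr{Z'(u) + \freq{X}},
\]
so any interior maximizer $u^*$ of $\score{X \mid u}$ satisfies $Z'(u^*) = -\freq{X}$. Applied to $U$ and $V$, this yields $Z'(r) = -\freq{U}$ and $Z'(t) = -\freq{V}$. Because $Z$ is concave, $Z'$ is non-increasing. From $\freq{U} \leq \freq{V}$ we get $Z'(r) = -\freq{U} \geq -\freq{V} = Z'(t)$, which by monotonicity of $Z'$ forces $r \leq t$.

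The only subtlety I anticipate is uniqueness and existence of the argmax: if $Z$ is merely concave (not strictly) or the maximizer lies on the boundary of the parameter space, the $\arg\max$ could be a set or fail to satisfy $Z'(u^*) = -\freq{X}$ exactly. For the standard log-linear families of interest in the paper (Bernoulli, Poisson, Gaussian), $Z$ is in fact strictly concave on the interior of its natural domain, so the first-order argument applies cleanly; in the general case one would either pick, say, the smallest maximizer consistently or replace the derivative equality by the subgradient inclusion $-\freq{X} \in \partial Z(u^*)$ and run the same monotonicity argument. This is the only place where the proof needs a little care, but it is routine rather than conceptual.
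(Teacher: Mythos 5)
Your proof is correct and takes essentially the same route as the paper: both arguments reduce to the fact that the second derivative of the (negative) log-normalizer is the (negative) variance of $S$, and both locate the maximizer via the first-order condition (your $Z'(u^*) = -\freq{X}$ is the paper's $\mean{u}{x} = \freq{X}$, since $Z'(u) = -\mean{u}{x}$) and conclude by monotonicity of $Z'$, i.e.\ of the mean. The paper handles the boundary subtlety you raise simply by allowing the optimal value to be attained at an infinite parameter, which matches your suggested fix.
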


\begin{proof}
Let $X$ be a segment.
A straightforward calculation shows that $\partial \score{X \mid u} / \partial
u = \abs{X}(\freq{X} - \mean{u}{x})$, where $\mean{u}{x} = \sum_x x p(x \mid u)$ is the
mean of the log-linear model.

We also have $\partial \mean{u}{x} / \partial u = \var{u}{x} \geq 0$. This
immediately proves that $\score{X \mid u}$ is a concave function.  The optimal
(possibly infinite) value is reached when $\freq{X} = \mean{u}{x}$.
Since $\mean{u}{x}$ is a monotone function of $u$, optimal value for $\score{U \mid u}$
will be smaller or equal than $\score{V \mid u}$.
\qed\end{proof}

\begin{lemma}
\label{lem:join}
Let $U$ and $V$ be two segments such that $\freq{U} < \freq{V}$.
Let $r$ and $t$ be two parameters, $r \geq t$.
Then there exists $u$, $r \geq u \geq t$ such that $\score{U \mid r} +  \score{V \mid t} \leq \score{U \mid u} + \score{V \mid u}$.
\end{lemma}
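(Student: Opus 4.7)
The plan is to leverage the concavity established in Lemma~\ref{lem:concave} together with the ordering of the optimal parameters. Specifically, let $r^* = \arg\max_u \score{U \mid u}$ and $t^* = \arg\max_u \score{V \mid u}$. Since $\freq{U} < \freq{V}$, Lemma~\ref{lem:concave} gives $r^* \leq t^*$. The hypothesis $r \geq t$ is therefore the ``wrong'' ordering relative to these optima, and the intuition is that any common value $u$ pulled between $r$ and $t$ should not hurt, and can only help, the total score.

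The main step is a case analysis on where $r$ and $t$ sit relative to $r^*$ and $t^*$. Using concavity, $\score{U \mid \cdot}$ is non-decreasing on $(-\infty, r^*]$ and non-increasing on $[r^*, \infty)$; similarly for $\score{V \mid \cdot}$ around $t^*$. First, if $r \leq t^*$, I would take $u = r$: then $t \leq r \leq t^*$ lies in the increasing region of $\score{V \mid \cdot}$, so $\score{V \mid t} \leq \score{V \mid r}$, and $\score{U \mid r}$ is unchanged. Symmetrically, if $t \geq r^*$, taking $u = t$ keeps $\score{V \mid t}$ fixed and gives $\score{U \mid r} \leq \score{U \mid t}$ via the decreasing region of $\score{U \mid \cdot}$.

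The remaining case is $t < r^*$ and $r > t^*$. Here I would pick any $u$ in the interval $[r^*, t^*]$ (which is nonempty since $r^* \leq t^*$, and which is contained in $[t, r]$ by the assumed inequalities). Then $u \geq r^*$ with $r > u$ implies $\score{U \mid r} \leq \score{U \mid u}$, and $u \leq t^*$ with $t < u$ implies $\score{V \mid t} \leq \score{V \mid u}$; summing gives the claim, and $r \geq u \geq t$ holds by construction.

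The main obstacle, if any, is purely bookkeeping in the case split: one must verify in each branch both that the chosen $u$ lies in $[t, r]$ and that the comparison uses the correct monotone piece of the concave function. No additional analytic facts beyond Lemma~\ref{lem:concave} (concavity and the ordering $r^* \leq t^*$) are required; the whole argument is a direct application of one-sided monotonicity of concave functions around their maximum.
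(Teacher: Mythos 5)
Your proof is correct and follows essentially the same route as the paper's: both rely on Lemma~\ref{lem:concave} for concavity and the ordering $r^* \leq t^*$ of the optima, and then choose $u$ by a case analysis (the paper collapses your second and third cases into one by setting $u = \max(t^*, t)$ when $r > t^*$, but the argument is the same). The only cosmetic difference is that the paper explicitly remarks that $r^*$ or $t^*$ may be infinite, a degenerate case your argument also handles since the relevant monotone piece then covers the whole line.
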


\begin{proof}
Let $r^*$ be such that $\score{U \mid r^*}$ is optimal, and define $t^*$ similarly.
Note that $r^*$ or $t^*$ can be infinite. Lemma~\ref{lem:concave} implies that
$r^* \leq t^*$.

Assume that $r \leq t^*$. 
This implies that $t \leq r \leq t^*$.
Since the score function is concave, $\score{V \mid r} \geq \score{V \mid t}$. Set $u = r$ to prove the lemma.

Assume that $r > t^*$. Set $u = \max\fpr{t^*, t}$.
Since $r \geq u \geq r^*$,
due to concavity of $\score{U \mid u} \geq \score{U \mid r}$ and by definition
$\score{V \mid u} \geq \score{V \mid t}$. This proves the lemma.
\qed\end{proof}

\begin{corollary}
\label{cor:monotonepar}
Let $\sgm{U}$ be a $K$-segmentation. Let $r_1, \ldots, r_K$ be $K$ parameters such that $r_i \geq r_{i - 1}$.
There exists a \emph{monotone} $K$-segmentation $\sgm{V} \preceq \sgm{U}$ such that
	$\score{\sgm{V}} \geq \score{\sgm{U} \mid r_1, \ldots, r_K}$.
\end{corollary}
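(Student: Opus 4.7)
My plan is a well-founded induction on $\sgm{U}$ under the partial order $\precneqq$; the order is well-founded since the set of corners of $A$ is finite. For the base case, if $\sgm{U}$ is already monotone then $\sgm{V} = \sgm{U}$ suffices, because $\score{\sgm{V}} = \sum_i \sup_r \score{C_i \mid r}$ dominates $\score{\sgm{U} \mid r_1, \ldots, r_K} = \sum_i \score{C_i \mid r_i}$ term by term, each maximization being unconstrained on the left.

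For the inductive step, assume $\sgm{U}$ is not monotone. The corner-level violation $\freq{U_{j+1}} > \freq{U_j}$ forces a segment-level violation: since each corner average $\freq{U_k}$ is a convex combination of $\freq{C_1},\ldots,\freq{C_k}$, if every adjacent pair of segment frequencies were ordered in the opposite direction then the corner averages would inherit that monotonicity. Hence there is an index $i$ at which the frequencies $\freq{C_i}$ and $\freq{C_{i+1}}$ sit, together with the parameter inequality between $r_i$ and $r_{i+1}$ from the corollary's hypothesis, in the configuration required by Lemma~\ref{lem:join}. That lemma then delivers a common parameter $u$ lying between $r_i$ and $r_{i+1}$ with
\[
  \score{C_i \mid u} + \score{C_{i+1} \mid u} \;\geq\; \score{C_i \mid r_i} + \score{C_{i+1} \mid r_{i+1}}.
\]

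Next I would construct $\sgm{U}'$ by shifting $U_i$ down to $U_{i-1}$ and leaving every other corner fixed; this yields $\sgm{U}' \precneqq \sgm{U}$ since the first differing corner shrinks strictly (which is genuine because $C_i \neq \emptyset$, otherwise $\freq{C_i}$ would not enter the violation). The new segments are $C'_i = \emptyset$ and $C'_{i+1} = C_i \cup C_{i+1}$. I would re-assign parameters by keeping $r'_j = r_j$ for $j \notin \{i, i+1\}$, setting $r'_{i+1} = u$, and choosing $r'_i$ within the interval determined by its neighbours so that the monotone ordering of the parameters is preserved; since $u$ lies between $r_i$ and $r_{i+1}$ this choice always exists. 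The pooling inequality gives $\score{\sgm{U}' \mid r'_1, \ldots, r'_K} \geq \score{\sgm{U} \mid r_1, \ldots, r_K}$, so the induction hypothesis applied to $\sgm{U}'$ yields a monotone $\sgm{V} \preceq \sgm{U}' \precneqq \sgm{U}$ with $\score{\sgm{V}} \geq \score{\sgm{U} \mid r_1, \ldots, r_K}$, and transitivity of $\preceq$ closes the argument.

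The principal difficulty is twofold: pairing the segment-level violation with the parameter inequality so that Lemma~\ref{lem:join} actually fires (the corner-level violation need not occur at the same index as the segment-level one, so some case analysis is required), and propagating the parameter ordering through each pooling step so that the inductive hypothesis remains applicable to $\sgm{U}'$. Lemma~\ref{lem:concave}, which ensures that the optimal parameter is monotone in the segment frequency, provides the comparisons needed to carry out this bookkeeping cleanly.
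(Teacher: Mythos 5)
Your proof is correct and follows essentially the same route as the paper's: locate an adjacent pair of segments violating monotonicity, pool their parameters via Lemma~\ref{lem:join}, merge them into one segment while inserting an empty corner so that the segmentation strictly decreases under $\precneqq$, and conclude by finiteness (the paper phrases this as ``repeat until convergence'' where you phrase it as well-founded induction). The only cosmetic differences are that you place the empty corner at index $i$ rather than at the front of the segmentation, and that you spell out the bridge between the corner-level and segment-level readings of monotonicity, which the paper's own proof uses interchangeably.
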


\begin{proof}
Assume that $\sgm{U}$ is not monotone, that is, there exists $U_i$ and $U_{i +
1}$ such that $\freq{U_{i} \sm U_{i - 1}} < \freq{U_{i + 1} \sm
U_i}$.  Lemma~\ref{lem:join} implies that we can replace $r_i$ and $r_{i + 1}$
with a common parameter, say $u$, and not decrease the score.
We can now join the $i$th and $i + 1$th segments into one segment with a parameter $u$ and
add a new empty corner at the beginning of the segmentation (with an infinitely large parameter).
This modification makes the segmentation smaller w.r.t our order.
We repeat this step until convergence. When converged, we have obtained a monotone
$K$-segmentation whose score is at least as good as the original segmentation.
\qed\end{proof}

We will now show that we can modify a segmentation containing a non-border.

\begin{proposition}
Let 
$U_0, \ldots, U_K = \sgm{U}$ be a monotone $K$-segmentation. 
Assume that there exist $U_i$, $X$, and $Z$ such that $U_{i - 1} \subseteq X
\subsetneq U_i \subsetneq Z \subseteq U_{i + 1}$ and $\freq{Z \sm
U_i} \geq \freq{U_i \sm X}$.

Then there exists a monotone $K$-segmentation $\sgm{V}$ such that either $\sgm{V} \precneqq \sgm{U}$
and $\score{\sgm{V}} \geq \score{\sgm{U}}$ or $\score{\sgm{V}} > \score{\sgm{U}}$.
\label{prop:ascent}
\end{proposition}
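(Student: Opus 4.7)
The plan is to construct $\sgm{V}$ by modifying $\sgm{U}$ only at position $i$, either shrinking $U_i$ down to $X$ or enlarging $U_i$ up to $Z$. Let $A = U_i \sm X$ and $B = Z \sm U_i$, so the hypothesis reads $\freq{B} \geq \freq{A}$. Let $r$ and $t$ be the optimal parameters for the segments $U_i \sm U_{i-1}$ and $U_{i+1} \sm U_i$ of $\sgm{U}$; monotonicity of $\sgm{U}$ together with Lemma~\ref{lem:concave} gives $r \geq t$.

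The central calculation is the score change when each candidate modification is evaluated with these \emph{fixed} parameters $r$ and $t$. Because the score is additive over entries at a fixed parameter, replacing $U_i$ by $X$ simply moves $A$ from parameter $r$ to parameter $t$, with total change $|A|\bigl[(Z(t) + t\freq{A}) - (Z(r) + r\freq{A})\bigr]$, while symmetrically replacing $U_i$ by $Z$ moves $B$ from parameter $t$ to parameter $r$ with change $|B|\bigl[(Z(r) + r\freq{B}) - (Z(t) + t\freq{B})\bigr]$. Assume for now $r > t$; the degenerate case $r = t$ forces the two segment densities to coincide and the shrink already yields zero change. Both expressions factor through the common threshold $\rho = (Z(t) - Z(r))/(r - t)$: the shrink change becomes $|A|(r - t)(\rho - \freq{A})$ and the grow change becomes $|B|(r - t)(\freq{B} - \rho)$. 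The hypothesis $\freq{A} \leq \freq{B}$ now forces at least one of $\freq{A} \leq \rho$ or $\freq{B} > \rho$, since otherwise $\freq{B} \leq \rho < \freq{A}$ would be a contradiction.

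In the first case the shrunk segmentation $\sgm{U'}$ (with $U'_i = X$) has score at least $\score{\sgm{U}}$ under the original parameter vector of $\sgm{U}$; in the second case the grown segmentation $\sgm{U''}$ (with $U''_i = Z$) has \emph{strictly} larger score under the same parameters. That parameter vector is monotone by Lemma~\ref{lem:concave}, so Corollary~\ref{cor:monotonepar} applied to the modified segmentation returns a monotone $\sgm{V}$ whose optimal score dominates the fixed-parameter score of the modification. In the shrink case, $\sgm{U'} \precneqq \sgm{U}$ at index $i$, and a short case split on where $\sgm{V}$ first deviates from $\sgm{U'}$---using that $\sgm{U}$ and $\sgm{U'}$ agree below $i$ and that $V_i = U'_i = X \subsetneq U_i$ whenever the deviation index exceeds $i$---gives $\sgm{V} \precneqq \sgm{U}$, which is option (a). In the grow case, $\score{\sgm{V}} > \score{\sgm{U}}$ directly, which is option (b). I expect the main obstacle to be the threshold argument: identifying $\rho$, verifying via the mean value theorem and the log-linear identity $-Z'(r) = \mean{r}{x}$ that $\rho$ is well defined and lies between the two segment densities, and then reading off from $\freq{A} \leq \freq{B}$ which side of $\rho$ we are on.
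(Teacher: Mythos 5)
Your core computation is the one the paper itself uses: at fixed parameters $r \geq t$ the log-linear score is linear in the sufficient statistic, so the change from shrinking $U_i$ to $X$ is $\abs{A}(r-t)(\rho - \freq{A})$ and from growing to $Z$ is $\abs{B}(r-t)(\freq{B}-\rho)$ for a common threshold $\rho$, and $\freq{A}\leq\freq{B}$ forces at least one of the two moves to be non-harmful. This is exactly the paper's identity $h(Y;r,t)-h(X;r,t)=g(\abs{Y\sm X},\freq{Y\sm X};r,t)$ combined with $g(m,\beta)=g(m,\alpha)+m(r-t)(\beta-\alpha)\geq g(m,\alpha)$, and your reduction to Corollary~\ref{cor:monotonepar} together with the bookkeeping showing $\sgm{V}\precneqq\sgm{U}$ in the shrink case also mirrors the paper. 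So the approach is right and essentially the same.

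The genuine gap is your very first step: ``let $r$ and $t$ be the optimal parameters.'' The segment score is defined as a supremum, $\score{C}=\sup_u\score{C\mid u}$, and this supremum need not be attained; the paper notes in the proof of Lemma~\ref{lem:concave} that the optimizer can be infinite. This is not a pathological corner case for the intended application: for the Bernoulli model a segment of density $0$ or $1$ (an empty or complete band) has its optimum only in the limit $u\to-\infty$ or $u\to+\infty$, and empty segments --- which Corollary~\ref{cor:monotonepar} itself introduces, ``with an infinitely large parameter'' --- have no well-defined optimizer at all. At an infinite parameter your threshold $\rho=(Z(t)-Z(r))/(r-t)$ becomes an indeterminate form ($Z(r)\to-\infty$ while $r-t\to\infty$), so you cannot simply compactify the parameter space and rerun the algebra. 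This is precisely why the paper's proof of Proposition~\ref{prop:ascent} is wrapped in an $\epsilon$-layer: it chooses finite, strictly decreasing parameters achieving the score within $\epsilon$, runs your dichotomy approximately (obtaining $y-h(X)\leq\epsilon(1+2c^{-1})$ in the shrink case), and then lets $\epsilon\to 0$, using the finiteness of the set of segmentations to turn the approximate inequality into the exact one claimed. Your argument is complete whenever all optimal parameters are finite, but to prove the proposition as stated you need to add this limiting argument (or an equivalent device).
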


In order to prove the proposition we will introduce some helpful notation. First, given
two parameters $r$ and $t$, we define
\[
	h(X ; r, t) = \score{X \sm U_{i - 1} \mid r} + \score{U_{i + 1} \sm X \mid t}\quad.
\]
We also define
\[
	 g(l, \delta ; r, t)  = l(Z(r) - Z(t) + (r - t)\delta)\quad.
\]
This function is essentially the difference between two scores.
\begin{lemma}
Let $U_{i - 1} \subseteq X \subsetneq Y \subseteq U_{i + 1}$.  We have $h(Y ; r, t) - h(X ; r, t) =g(\abs{Y \sm X}, \freq{Y \sm X} ; r, t)$.
\end{lemma}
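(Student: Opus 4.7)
The plan is a direct computation using the additivity of $\score{\cdot \mid r}$ over disjoint unions. First I would unpack the definition of $\score{C \mid r}$ to observe that
\[
    \score{C \mid r} = \abs{C}Z(r) + r\sum_{c \in C}D(c),
\]
so that both $\abs{C}$ and the sum are additive over disjoint unions. Consequently, for disjoint segments $A$ and $B$, $\score{A \cup B \mid r} = \score{A \mid r} + \score{B \mid r}$.

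Next I would exploit $X \subsetneq Y$ to decompose the two relevant segments disjointly. Since $U_{i-1} \subseteq X \subsetneq Y$, we have
\[
    Y \sm U_{i - 1} = (X \sm U_{i - 1}) \cup (Y \sm X),
\]
and since $X \subsetneq Y \subseteq U_{i+1}$,
\[
    U_{i + 1} \sm X = (U_{i + 1} \sm Y) \cup (Y \sm X),
\]
both unions being disjoint. Applying the additivity observation to both decompositions gives
\[
    \score{Y \sm U_{i-1} \mid r} - \score{X \sm U_{i-1} \mid r} = \score{Y \sm X \mid r}
\]
and
\[
    \score{U_{i+1} \sm X \mid t} - \score{U_{i+1} \sm Y \mid t} = \score{Y \sm X \mid t}.
\]

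Finally I would substitute these into $h(Y;r,t) - h(X;r,t)$ and expand using $\score{Y \sm X \mid r} = \abs{Y \sm X}(Z(r) + r\freq{Y \sm X})$. After cancellation of the $\freq{Y \sm X}$ term weighted by $r - t$, this collapses to
\[
    \abs{Y \sm X}\bigl(Z(r) - Z(t) + (r - t)\freq{Y \sm X}\bigr) = g(\abs{Y \sm X}, \freq{Y \sm X}; r, t),
\]
which is exactly the claim. There is no real obstacle here: the lemma is essentially a bookkeeping identity. The only thing to be careful about is signing the second difference correctly, since $Y$ enters $h$ with opposite monotonicity in the two summands (enlarging $Y$ adds to the first score but subtracts from the second), which is what produces the $Z(r) - Z(t)$ and $r - t$ structure in $g$.
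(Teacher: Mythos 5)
Your proof is correct and takes essentially the same route as the paper: both are direct bookkeeping computations exploiting the fact that $\score{C \mid r} = \abs{C}Z(r) + r\sum_{c \in C}D(c)$ is additive over disjoint unions, so the difference collapses to $\score{Y \sm X \mid r} - \score{Y \sm X \mid t} = g(\abs{Y \sm X}, \freq{Y \sm X}; r, t)$ (the paper organizes this by writing $h(X;r,t)$ as an affine function of $\abs{X}$ and $\sum_{c\in X}D(c)$ plus a constant, but the computation is the same). The only quibble is the phrase ``cancellation of the $\freq{Y\sm X}$ term'' --- nothing cancels; the $r$ and $-t$ coefficients simply combine into $(r-t)\freq{Y \sm X}$ --- but your final identity is right.
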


\begin{proof}
Let us define $\cs{Z} = \sum_{z \in Z} D(z)$.
Note that
\[
\begin{split}
	h(X ; r, t) & = (\abs{X} - \abs{U_{i - 1}})Z(r) + r(\cs{X} - \cs{U_{i - 1}}) \\
	&\quad+ (\abs{U_{i + 1}} - \abs{X})Z(t) + t(\cs{U_{i + 1}} - \cs{X}) \\
	& = \abs{X}(Z(r) - Z(t)) + (r - t)\cs{X} + \text{const},
\end{split}
\]
where const does not depend on $X$. Let us write $d = Z(r) - Z(t)$ and $Z = Y \sm X$.
This allows us to write
\[
\begin{split}
	& h(Y ; r, t) - h(X ; r, t) \\
	&\quad = \abs{Y}d + (r - t)\cs{Y} - \abs{X}d - (r - t)\cs{X}  \\
	&\quad =\abs{Z}\big(d + (r - t)\frac{\cs{Z}}{\abs{Z}}\big) = g(\abs{Y \sm X}, \freq{Y \sm X} ; r, t)\quad.
\end{split}
\]
This completes the proof.
\qed\end{proof}

\begin{proof}[of Proposition~\ref{prop:ascent}]
Replace $U_i$ with $Z$ and let $\sgm{U}^*$ be the resulting segmentation. 
Similarly, replace $U_i$ with $X$ and let $\sgm{U}'$ be the resulting segmentation. 
Define $y = \sup_{r, t} h(U_i; r, t)$.

Fix $\epsilon > 0$.
Lemma~\ref{lem:concave} implies that there exist $r_1, \ldots, r_K$ s.t.
\[
	r_{j - 1} > r_j \text{ for } j = 2, \ldots, K  \quad\text{and}\quad \score{\sgm{U}^* \mid r_1, \ldots, r_K} \geq \score{\sgm{U}} - \epsilon\quad.
\]
From now on we will write $h(X)$ to mean $h(X ; r_i, r_{i + 1})$ and $g(k, \delta)$ to mean $g(k, \delta ; r_i, r_{i + 1})$.
Note that we must have $h(U_i) \geq y - \epsilon$.

Assume that $h(Z) > y + \epsilon$. Then $\score{\sgm{U}^* \mid r_1, \ldots, r_K} > \score{\sgm{U} \mid r_1, \ldots, r_K}  + \epsilon \geq \score{\sgm{U}}$.
Corollary~\ref{cor:monotonepar} applied to $\sgm{U}^*$ shows that there is a monotone segmentation
with a score better than $\score{\sgm{U}}$.

Assume that $h(Z) \leq y + \epsilon$.
We must have $h(U_i) + \epsilon \geq y \geq h(Z) - \epsilon$ or, equivalently, $2\epsilon \geq  h(Z) - h(U_i)$.

Define $\beta  = \freq{Z \sm U_i}$ and $\alpha  = \freq{U_i \sm X}$,
$n = \abs{Z \sm U_i}$, $m = \abs{U_i \sm X}$.
Define $c = n / m$.  We now have
\[
\begin{split}
	2\epsilon & \geq h(Z) - h(U_i) = g(n, \beta) = cg(m, \beta) \\
	& = cg(m, \alpha) + cm(r_i - r_{i + 1})(\beta - \alpha) \geq cg(m, \alpha) \\
	& = c(h(U_i) - h(X)) \geq c(y - \epsilon - h(X)), \\
\end{split}
\]
which implies $y - h(X) \leq \epsilon(1 + 2c^{-1}) \leq \epsilon(1 + 2\abs{U_K})$.
Corollary~\ref{cor:monotonepar} now implies that there exists a monotone segmentation $\sgm{V}$
with $\sgm{V} \precneqq \sgm{U}$ such that $y - \score{\sgm{V}}  \leq \epsilon(1 + 2\abs{U_K})$.
Since this holds for any $\epsilon > 0$, we have proved the proposition.
\qed\end{proof}

\begin{proof}[of Proposition~\ref{prop:borders}]
Assume that $U_j$ is not a border, then Lemma~\ref{lem:squeeze} implies that
there exist $U_i$, $X$, and $Y$ such that the conditions in
Proposition~\ref{prop:ascent} are satisfied. Apply
Proposition~\ref{prop:ascent} to obtain a new monotone segmentation, $\sgm{V}$.
Reapply the step to $\sgm{V}$ until $\sgm{V}$ consists only of borders. This procedure
terminates since at each step we either increase score or move segmentation is
moved to the left w.r.t the partial order $\prec$. There are finite number of
segmentations and no segmentation is visited twice, hence we converge to a segmentation
consisting only of borders.
\qed\end{proof}

\section{Proof of Proposition~\ref{prop:move}}\label{sec:propmove}

\begin{proof}[of Proposition~\ref{prop:move}]
Define $Z = \maxc{U}$. Let us first prove that $Z$ is a border.

Let $X = \minc{Z}$. If $U \sm X \neq \emptyset$, then
Lemma~\ref{lem:augment2} implies that $U$ is not a border.
Hence $U \subseteq X$. Lemma~\ref{lem:split} implies that
$\freq{Z \sm X} \geq \freq{Z \sm U}$.

Let $Y \supsetneq Z$ be a corner, then Lemma~\ref{lem:split}
implies that $\freq{Y \sm Z} < \freq{Z \sm U} \leq \freq{Z \sm X}$.
By definition, $Z \sm X$ has the smallest possible average.
Consequently, $Z$ is a border.

Assume that $Z \sm V \neq \emptyset$, then
Lemma~\ref{lem:augment} implies that $V$ is not a border,
which is a contradiction.  Hence $Z \subseteq V$. Since $V$
is the border next to $U$, we must have $Z = V$.

The proof in other direction is similar.
\qed\end{proof}

\section{Proof of Proposition~\lowercase{\ref{prop:flipcycle}}}\label{sec:proofflipcycle}
\vspace{1mm}

\begin{proof}
Let $k$ be as in Proposition~\ref{prop:converge}.
Fix $U_j \in \brd{T^k}$ with $j > 0$.
Let us write $V^i$ to be the portion of $T^i$
that corresponds to the entries in $C = U_j \sm U_{j - 1}$.
Since $\brd{T^i} = \brd{T^k}$ for $i > k$, it is enough
to prove the result by showing that there exists $m$ such that $V^m = V^{m + 2}$.
We will prove the result by induction over the size of $C$.

To that end, consider a DAG $G$ where the nodes are the entries in $C$. Two
distinct nodes $(a, b)$, $(c, d)$ are connected if and only if $a \leq c$ and
$b \leq d$.  Let $p$ be a sink in $G$. Define $V^k_p$ to be equal to
$V^k$ without the entry $p$ and let $V^i_p$, for $i > k$, be the order obtained
from $V^{i - 1}_p$ by simulating \textsc{FindOrder} with $w_F$.
Since $p$ is a sink, it does not block any entries as \findorder updates the order.
This implies that $V^i_p$ is equal to $V^i$ with $p$ deleted for any $i \geq k$.

By the induction assumption there exists $m_p$ such that $V^{m_p}_p = V^{m_p +
2}_p$ for each sink $p$.  Define $m = \max m_p$. Note that $V^m_p = V^{m + 2}_p$ for each sink $p$.

Assume now that we have only one sink, say $p$. Then $p$ will always
be last entry in $V^i$ and it follows immediately that $V^m = V^{m + 2}$.
We can safely assume that we have more than one sink.

Assume that we have exactly two sinks, say $p$ and $q$. Assume that $V^m_p$
does not end on $q$. Then it must be that $V^m_p$ ends on a parent of $p$ and
$p$ is the last entry in $V^m$. Since $V^m_p = V^{m + 2}_p$, the proposition
follows. Similar argument holds for $V^m_q$, $V^{m + 1}_p$, and $V^{m + 1}_q$.
Hence, we can safely assume that $V^m_p$, $V^m_q$, $V^{m + 1}_p$, and $V^{m + 1}_q$
all have $p$ or $q$ as their last entry. This implies that $V^m$, $V^{m + 1}$, and $V^{m + 2}$.
have $p$ and $q$ as their last entries. Assume that $p$ occurs before $q$ in $V^m$.
There will be a point during $\textsc{FindOrder}(w_F(\cdot; T^m))$
when $p$ and $q$ are in the heap $H$ at the same time, otherwise there will be an entry between $p$ and $q$ in $V^{m + 1}$.
This implies that $q$ occurs before $p$ in $V^{m + 1}$. We apply the same argument
to $V^{m + 1}$ to conclude that $p$ occurs before $q$ in $V^{m + 2}$. This implies
that $V^{m + 1} = V^{m + 2}$.

Assume that we have more than two sinks. Let $p$, $q$, and $r$ be three sinks.
We can deduce from $V^m_r$ whether $p$ occurs before $q$, or vice versa.  Since
we can do this for any sink triplet, we can deduce the order of sinks in $V^m$.
The last sink, say $p$, will be the last in $V^m$.  By definition of $m$, the
order will be the same in $V^{m + 2}$ and the same sink will be also the last
in $V^{m + 2}$. Since $V^m_p = V^{m + 2}_p$, the proposition follows.
\qed\end{proof}

\fi

\end{document}